\documentclass[a4paper,10pt]{scrartcl}%Autor: Bernhardt, Catherine

%loading packages before commands
%------------------------------------------------------------
\usepackage{actuarialsymbol}%actuarial notation
\usepackage{amsmath}%many things (e.g. equation in theorem command, align, gather ...)
\usepackage{amssymb}%R, C, Q with double-bar, {\setminus} otherwise strange spacing depending on sets like $R_+\setminus D$ and $[0,T[\,\setminus D$
\usepackage{amsthm}%\newtheorem command
\usepackage{bbm}%indicator function (1 with double-bar, missing in amssymb)
\usepackage{float}% pfloat for [H]
\usepackage{chngcntr}%for figures, counter within section
\usepackage[margin=3cm]{geometry}%changing page layout, 3cm edge, symmetric
\usepackage{graphicx}%to include graphics
\usepackage{lmodern}%nice serif font
\usepackage{natbib}%redefining \cite, no need writing authors names in main body, only \citet needed 
\usepackage[normalem]{ulem}%to cross out text
\usepackage[table]{xcolor}%alternating colors for tables
\usepackage{caption, subcaption}%creating in-latex figures
\usepackage{url}%bibliography uses \url

%commands after packages
%------------------------------------------------------------
%\counterwithin{figure}{section}%changing numbers of figures
%\counterwithin{table}{section}%changing numbers of tables
\newtheorem{lem}{Lemma}[section]

\newtheorem{claim}[lem]{Claim}
\newtheorem{thm}[lem]{Theorem}
\newtheorem{cor}[lem]{Corollary}
\newtheoremstyle{NoItalic}{}{\baselineskip}{}{}{\bfseries}{.}{.5em}{\thmname{#1}\thmnumber{ #2}\thmnote{ #3}}\theoremstyle{NoItalic}%dick, mit Punkt, Abstand, Anordnung Nummer Defi/Satz
%space between lines
\setcapindent{0pt}%caption indent for multiline caption after a figure and table

\pdfminorversion=7 % To get rid of the warnings about pdf v1.5 supported only.

\begin{document} 

\title{Quantifying the trade-off between income stability and the number of members in a pooled annuity fund}
\author{Thomas Bernhardt\footnote{ {\texttt{T.Bernhardt@hw.ac.uk}}. Department of Mathematics, University of Michigan, Ann Arbor, Michigan MI 48109-1043, USA.} \hspace{0em} and Catherine Donnelly\footnote{{\texttt{C.Donnelly@hw.ac.uk}} (corresponding author). Risk Insight Lab, Department of Actuarial Mathematics and Statistics, Heriot-Watt University, Edinburgh, Scotland EH14 4AS.}}

\date{\today}
\maketitle

\vspace{-4em}

\section*{\begin{center}Abstract\end{center}}
\begin{abstract}
\noindent The number of people who receive a stable income for life from a closed pooled annuity fund is studied.  Income stability is defined as keeping the income within a specified tolerance of the initial income in a fixed proportion of future scenarios.  The focus is on quantifying the effect of the number of members, which drives the level of idiosyncratic longevity risk in the fund, on the income stability.  To do this, investment returns are held constant and systematic longevity risk is omitted.   An analytical expression that closely approximates the number of fund members who receive a stable income is derived and is seen to be independent of the mortality model.  An application of the result is to calculate the length of time for which the pooled annuity fund can provide the desired level of income stability.

\textbf{Keywords:} Longevity credit; decumulation; tontine; unsystematic risk; pooling.
\end{abstract}

%------------------------------------------------------------
\section{Introduction}\label{section:Intro}
%------------------------------------------------------------
The lack of innovative products to convert a lump-sum into a life-long retirement income is increasingly seen as a problem in the United Kingdom \citep[Paragraph 12, page 10]{frankfield2018}.  Previously, pension savers typically either bought a life annuity from an insurance company or saved within a defined benefit pension scheme.  With the former option rather unattractive for those with smaller pension pots due to low annuity rates (caused by low investment returns, uncertainty about future lifespans and solvency capital requirements) and the latter becoming less available due to the closing of defined benefit pension schemes, there is an opportunity for attractive, alternative ways to convert savings into an income for life.

Pooled annuity funds, a type of tontine, are an alternative which offer the opportunity of a lifelong, stable income to its participants.  They pool the retirement wealth of their participants and pay it out to the survivors as a regular income.  They should give a higher income than under self-insurance (also called income drawdown in UK terminology), for the same chance of exhausting funds.  They can do this because they pool the participants' longevity risk, and hence subsidize the payments to the longer-lived members, and they do not pay a bequest to the participants' beneficiaries.  It is the same principle underpinning life annuities.  The critical difference is that the income from a conventional life annuity product is guaranteed by an insurance company whereas that from a pooled annuity fund is not guaranteed by anyone.  The income from a pooled annuity fund will vary due to fluctuations in investment returns and longevity experience, unlike those of a conventional life annuity, which gives a constant income regardless of investment and longevity experience.

In this paper a mathematical expression is derived which allows the calculation of a lower bound on the number of participants who receive a stable, life-long income once the degree of income volatility is specified.  A numerical study suggests that the lower bound is close in value to the true value.  By keeping investment returns constant and not including systematic longevity risk (the risk that the wrong distribution of future lifetime has been chosen), the risk attributable to idiosyncratic longevity risk (the risk that the distribution of future lifetime is not observed perfectly among the annuitants) is studied in isolation.  This is important since the \emph{raison d'\^etre} of joining the fund for individual participants - a stable income paid for life - is reliant on idiosyncratic longevity risk being sufficiently diversified.  The analysis in this paper will allow the risk inherent in random investment returns and systematic longevity risk to be quantified and attributed separately in further research.  Importantly, the derived results give us the insight that the proportion of people who receive a stable, life-long income is approximately independent of the mortality model, once the number of members initially in the fund is fixed.

Tontines have enjoyed recently some attention in the academic literature.  \citet{Piggottetal2005} proposed Group Self-Annuitization, which is the structure studied here.  In it, the income paid to the annuitants is calculated from their individual fund values and annuity factors.  The income calculated for each annuitant should be payable for life as long as both the investment and mortality experience turn out as assumed in the annuity factors.  Group Self-Annuitization is similar to the Pooled Annuity Fund introduced in \citet{stamos2008}, although the latter does not specify how income is withdrawn from their structure, but lets income be withdrawn as the annuitant desires.  These structures are designed to operate when all annuitants are independent and identical copies of each other.  The same membership assumption is made in this paper; indeed, the pooled annuity fund studied here is that of \citet{Piggottetal2005} in the homogeneous case.  An extension of Group Self-Annuitization to heterogeneous groups is given in \citet{valdezetal2006}.

Alternative ways of sharing out the funds in a tontine are proposed by \citet{donnellyetal2014} and \citet{sabin2010}, with the latter further discussed in \citet{FormanSabin2015}.  Their tontines pay survivors an explicit longevity credit, which represents a share of the funds of the recently deceased, and the level of income withdrawn is not mandated.  In contrast, tontines like those of \citet{MilevskySalisbury2015} use the level of income withdrawn to implicitly distribute the funds of the deceased among the survivors.   \citet{chenetal2019} proposes and analyzes a product that is comprised of a tontine until a fixed age and a deferred life annuity that comes into payment at the fixed age.  \citet{DonnellyYoung2017} propose a tontine product with a minimum guaranteed payment, and the proposed product is developed and examined further in \citet{chenrach2019}.

The question of actuarial fairness - that the expected value of the benefits received should equal the value of each annuitant's fund - can arise in a discussion of tontine structures.  Some are actuarially fair only for homogeneous memberships, such as the Group Self-Annuitization and the Pooled Annuity Fund (as shown in \citealt{Donnelly2015}).  Others are actuarially fair for heterogeneous memberships, such as those in \citet{donnellyetal2014} and \citet{sabin2010}.  \citet{MilevskySalisbury2016} argue instead for actuarial equitability rather than fairness - everyone gets the same expected value of benefit payments even if it is less than what they put in - and propose a structure to do this.

\citet{qiaosherris2013} analyze and study how to manage systematic longevity risk in Group Self-Annuitization structures.  Noting that annuity payments decline at older ages if the annuity factors used to calculate the income withdrawn are based on up-to-date mortality experience, they suggest a way of sharing systematic longevity risk across cohorts, through an adjustment to the calculation of \citet{Piggottetal2005}.  They do a numerical study of this, calculating the quantiles of the income distribution at fixed time points.

However, our aim is to study idiosyncratic longevity risk.  The mathematical results presented hold for any mortality model as long as systematic longevity risk is excluded.  The evolution of each sample path of the income is analysed, which permits a precise quantification of the behaviour of the income streams.  \citet{Piggottetal2005} and \citet{sabin2010} both consider a particular mortality model and plot the income amounts, observing that they are more volatile at older ages.  Their conclusions are qualitative and not quantitative.  None of these studies defines income stability and so cannot study it rigorously and generally, as is done here.  While one definition of income stability is proposed in this paper, there are many other possible definitions which may be more suitable for different needs. 

After the operation of the pooled annuity fund is detailed in Section \ref{section:operation}, the results in Section \ref{subsection:main-theorem} enable the calculation of the number of people who can receive a life-long, stable income from the pooled annuity fund.  In effect, this is the number of people who die while the desired level of income is stable.  Our results give a lower bound which can be used as an approximation (Section \ref{subsection:efficacy}) to the true number of people.  Importantly, the approximation is independent of the mortality model and thus the numbers calculated hold for any age, sex or other sub-population, once the fund parameters are fixed.

Determining for how long a pooled annuity fund can pay a stable income to its members from the theoretical results is studied in Section \ref{subsection:confidence}.  Finally, an approximation to the lower bound is presented in Section \ref{section:Donsker-approach}, with the derivation relegated to the Appendix.

%------------------------------------------------------------
\section{The operation of the pooled annuity fund}\label{section:operation}
%------------------------------------------------------------

Consider a group of $N \geq 2$ individuals who constitute the entire membership of the pooled annuity fund at time 0.  Each member is an independent and identical copy of the rest.  No member joins after time 0 and no member can leave except through death.

%------------------------------------------------------------
\subsection{Future lifetime random variables and survival probabilities}
%------------------------------------------------------------

The members are each aged $x\geq 0$ at time 0.  Represent the $i$th member's future lifetime from age $x$ by the real-valued random variable $T_{i} > 0$, for $i=1,\ldots,N$.  The random variables $T_{1}, T_{2}, \ldots, T_{N}$ are defined on the probability space $(\Omega,\mathcal{F},\mathbb{P})$, are independent and identically distributed and have a continuous distribution.  For each $\omega \in \Omega$, the order statistics $(T_{(i)}(\omega))_{i=1}^N$ are the increasingly ordered $(T_i(\omega))_{i=1}^N$, i.e. $T_{(1)} \leq T_{(2)} \leq \cdots \leq T_{(N)}$.   

The number of individuals observed to be alive at age $x+t$ is
\begin{equation*}
    L_{x+t}=\sum_{i=1}^{N} \mathbbm{1}_{[T_i>t]}, \quad\mbox{for $t\geq0$},
\end{equation*}
in which $\mathbbm{1}_{A}$ is the zero-one indicator function of the set $A \subset \Omega$.

The empirical survival probability to age $x+t+s$ conditional on being alive at age $x+t$, for $s,t \geq 0$, is
\[
{}_{s} \hat{p}_{x+t} = L_{x+t+s} / L_{x+t},\quad\mbox{if $L_{x+t} > 0$},
\]
and the assumed true survival probability from age $x+t$ to age $x+t+s$ is
\[
{}_{s} p_{x+t}=\mathbb{P} \left[ T_i>t+s \, \big\lvert \,T_i>t \right].
\]
In line with actuarial convention, $\hat{p}_{x+t}:={}_{1}\hat{p}_{x+t}$ and $p_{x+t}:={}_{1}p_{x+t}$.  The calculation of longevity credits paid to survivors in the fund uses the empirical survival probabilities $({}_{s} \hat{p}_{x+t})_{s,t \geq 0}$, whereas the calculation of the income withdrawn from the members' account values relies on the assumed true survival probabilities $({}_{s} p_{x+t})_{s,t \geq 0}$.

%------------------------------------------------------------
\subsection{The income calculation} \label{SUBSECincomeprocess}
%------------------------------------------------------------
Each member of a pooled annuity fund has a fund account with value $W(t)\geq0$ at time $t$, with constant $W(0)>0$ at time $0$.  The fund account value changes over time due to investment returns, income withdrawals and longevity credits, the latter coming from the re-allocation of the fund accounts of the members who died. 

The account values are invested to get a constant effective rate of return $R > -1$ over each unit of time.

The income withdrawn from each account by each surviving member is the amount that a fair life annuity would pay if purchased with the current fund value.  Thus the member withdraws an income of amount
\begin{equation} \label{eqn:incomecalculation}
    C(t) = W(t) / \ddot{a}_{x+t} \quad\mbox{at $t=0,1,2,\ldots$},
\end{equation}
in which 
\[
\ddot{a}_{x+t} = 1 + \sum_{j=1}^{\infty} (1+R)^{-j} \, {}_{j}p_{x+t}.
\]

%------------------------------------------------------------
\subsection{The longevity credit calculation}
%------------------------------------------------------------

The account value at time $t+1$ of a member who dies over the time period $(t,t+1]$ is $(W(t)-C(t))(1+R)$.  The number of deaths observed over $(t,t+1]$ is $L_{x+t}-L_{x+t+1}$ and so an amount equal to $(W(t)-C(t))(1+R)(L_{x+t}-L_{x+t+1})$ is distributed equally among the $L_{x+t+1}$ survivors observed at time $t+1$.  Thus the longevity credit paid at time $t+1$ to each member alive at time $t+1$ is
\begin{equation} \label{eqn:longevitycredit}
    M(t+1)=(W(t)-C(t))(1+R)(L_{x+t}-L_{x+t+1}) / L_{x+t+1},\quad\mbox{if $L_{x+t+1} > 0$}.
\end{equation} 
If $L_{x+t+1}=0$ then there is no-one left alive at time $t+1$, and the pooled annuity fund ceases to exist.  With no-one left in the fund at time $t+1$, the account values of those who died over $(t,t+1]$ would be paid to their estate at time $t+1$.

Immediately after the payment of the longevity credit, the account value at time $t+1$ of a member who is alive at that time is
\begin{equation} \label{eqn:wealthwithM}
    W(t+1)=(W(t)-C(t))(1+R) + M(t+1),\quad\mbox{for $t\geq0$}.
\end{equation}
In contrast, the account value of a member who is dead at time $t+1$ is zero since the funds of deceased members are distributed among the survivors. 

Using the identity $\ddot{a}_{x+t}-1=p_{x+t} \, \ddot{a}_{x+t+1}/(1+R)$ and substituting for $M(t+1)$ from equation (\ref{eqn:longevitycredit}) and for $W(t)$ and $W(t+1)$ from equation (\ref{eqn:incomecalculation}) into equation (\ref{eqn:wealthwithM}), shows that the income at time $t+1$ can be written as
\begin{equation} \label{eqn:incomewithoutM}
C(t+1) = C(t)\;p_{x+t} / \hat{p}_{x+t},\quad\mbox{if $L_{x+t+1} > 0$}.
\end{equation}
Thus fluctuations in the income are caused only by fluctuations in the observed survival probability $\hat{p}_{x+t}$ against the true survival probability $p_{x+t}$ and the level of investment returns do not affect the fluctuations.  This is a consequence of setting the investment returns to be constant and known, which allows idiosyncratic mortality risk to be studied in isolation.   The income in one future scenario is shown in Figure \ref{fig:fluctuating-income}.

\begin{center}
\includegraphics[clip,trim={0 0.5cm 0 1.25cm},width=0.6\linewidth]{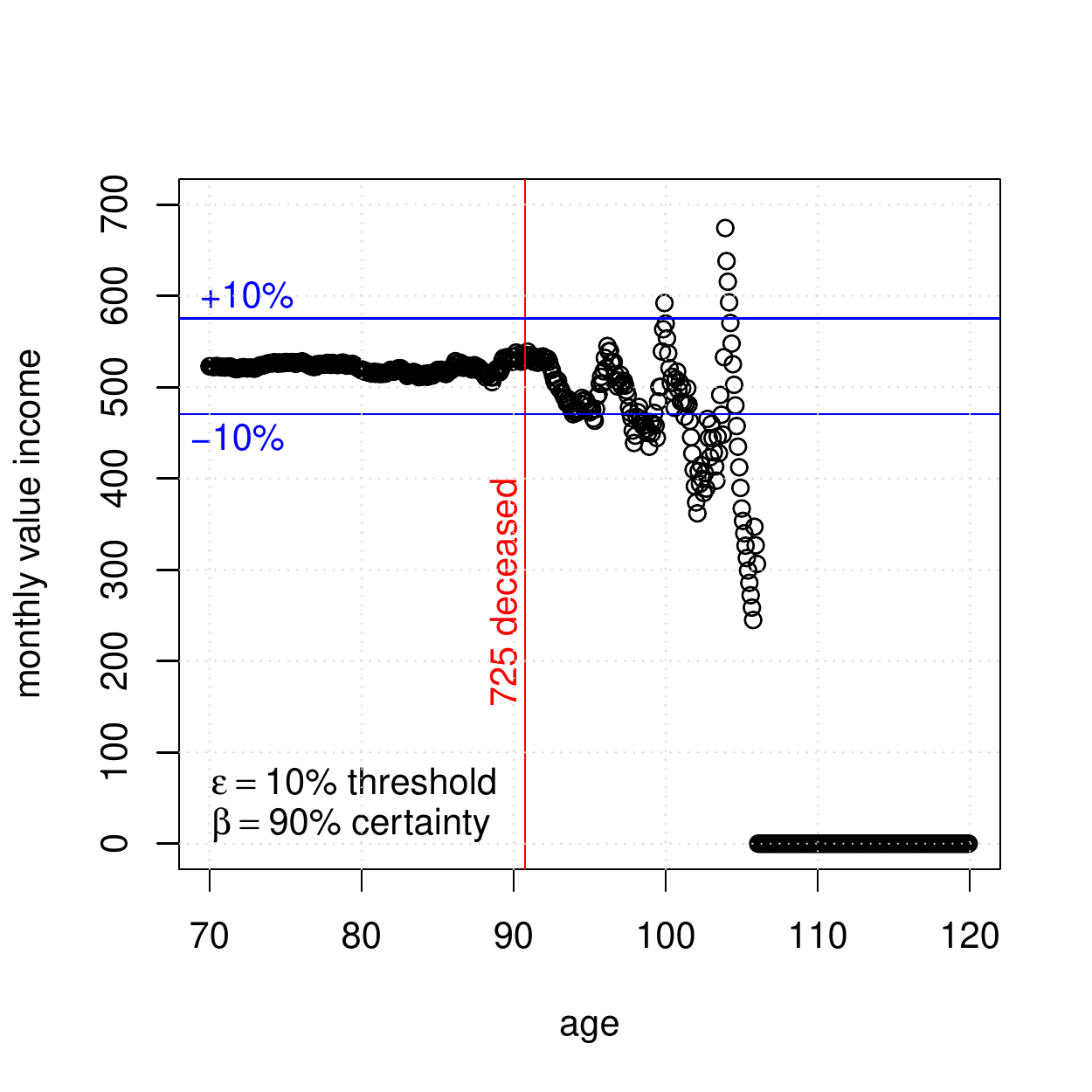}
\captionof{figure}{
    A sample of the income process, shown as black circles, for a fund with initially $N=1000$ members, each aged $x=70$ and bringing $W(0)=100\,000$ units of wealth to the fund at time $0$.  Time is measured in months, the mortality model used is the UK-based life table S1PFL \citep{IFoA2008} and a uniform distribution of deaths is assumed between integer ages.  The blue horizontal lines indicate the lower and upper income thresholds ($\varepsilon_{1}=\varepsilon_{2}=0.1$).  The plot shows only one sample.  However, for $90\%$ of all samples, the income stays between the income thresholds until $725$ members died.  In the sample income process shown, the income happens to stay within the band for a few years after the time of the $725$th death.
}
\label{fig:fluctuating-income}
\end{center}
If idiosyncratic longevity risk is completely diversified in the pooled annuity fund then, for all $t \geq 0$, ${}_{t} \hat{p}_{x} = {}_{t} p_{x}$ and it would follow that $C(t)=C(0)$.  However, the analysis in this paper focuses on less-than-perfect pooling of idiosyncratic longevity risk.  The key investigation is to measure how well does the income stream $(C(t); t=0,1,2,\ldots)$ stay close to the initial amount of income withdrawn, $C(0)$.

%------------------------------------------------------------
%------------------------------------------------------------
\section{The problem}\label{section:problem}
%------------------------------------------------------------
%------------------------------------------------------------
The purpose of the pooled annuity fund is to pay a regular, stable income to the fund's members for as long as they live.  The level of the initial income withdrawn, $C(0)$, is important since the stability of the subsequent income withdrawn by the annuitants is measured by reference to it.

Fix a value $\varepsilon_{1}\in(0,1)$, called the lower threshold parameter, a value $\varepsilon_{2} >0$, called the upper threshold parameter and a value $\beta\in(0,1)$, called the certainty.  The income $C(\omega, t)$ at time $t>0$ and in future scenario $\omega$ is deemed to be stable if $C(\omega, t) \in [(1-\varepsilon_{1})C(0), (1+\varepsilon_{2})C(0)]$.  The values $(1-\varepsilon_{1})C(0)$ and $(1+\varepsilon_{2})C(0)$ are called the lower and upper income thresholds, respectively.  The aim is to measure for how long is the income stable in at least $100\beta$\% of the future scenarios.  Denote the number of people who receive an income in the range $[(1-\varepsilon_{1})C(0), (1+\varepsilon_{2})C(0)]$ in at least $100\beta$\% of the future scenarios for the whole of their lifetime by $k_{C}$.  Then all participants receive a stable income with certainty $\beta$ up to the time of the $k_{C}$th death.  But it is only the $k_{C}$ annuitants who die first who receive a stable income with certainty $\beta$ for the whole of their lifetime, rather than for the first part of their lifetime only.

The motivation for the definition of income stability is that it should be understandable to finance professionals.  The closest definition to ours is the ``4\% rule'', introduced in \citet{Bengen2001} and further studied in \citet{Guyton2004}, \citet{GuytonKlinger2006} and \citet{PfauKitces2014}.  The idea is to calculate the initial amount of income that can be withdrawn for 30 years, with subsequent year's withdrawals adjusted for inflation, such that there is at least a 90\% chance of this strategy being sustainable.  In contrast to the probabilistic approach taken here, an alternative approach can be to maximize the expected value of a function of the income withdrawn.  For example, \citet{HeLiang2013} minimize the discounted expected value of the squared distance of the income from a target value.  Others such as \citet{BruhnSteffensen2013}, \citet{Constantinides1990}, \citet{Munk2008} and \citet{vanBilsenetal2017} are based on habit formation.  The latter paper and \citet{Curatola2017} assume a loss averse investor.

The time when exactly $k$ members have passed away is denoted by the random variable $T_{(k)}$, the $k$th order statistic of the future lifetime random variables.   Suppose that the maximum integer $k:=k_{C}$ to satisfy
\begin{equation}\label{eq:want-T}
    \mathbb{P} \left[ (1+\varepsilon_{2}) C(0) \geq C(s) \geq (1-\varepsilon_{1})C(0) \;\;\mbox{for all $s \in \{1,2,\ldots, \lfloor T_{(k)} \rfloor \}$} \right] \geq \beta,
\end{equation}
in which $\lfloor T_{(k)} \rfloor$ is the integer part of $T_{(k)}$, is to be determined.  Figure \ref{fig:fluctuating-income} illustrates a choice of the income thresholds, with a realisation of the income stream.

In this paper, a close, lower bound to $k_{C}$ is found, which holds for any mortality model that excludes systematic longevity risk.  This surprising independence result is a consequence of the inverse transform method, which is applied in Theorem \ref{theorem:main} below.  It means that the values of the lower bound calculated for a choice of $\varepsilon$, $\beta$ and $N$ remain the same regardless of what distribution is chosen for $T_1, T_2, \ldots, T_N$.  In contrast, calculating the value of $k_{C}$ directly from (\ref{eq:want-T}) does require the choice of a mortality model.

It would be ideal to find an explicit expression for the maximum integer that satisfies (\ref{eq:want-T}), rather than a lower bound on it.  Indeed, the exact distribution of the order statistics of $T_1, T_2, \ldots, T_N$ are known \citep{Birnbaum1969, Miklos1965}.  Unfortunately, they are not amenable to calculations since their distribution functions are polynomials of degree $N$, and in a pooled annuity fund the value of $N$ represents the initial number of members.  Thus the value of $N$ is likely to be in the order of hundreds or more.  

Going further, the preferred goal would be to find the maximal integer time $t \geq 1$ such that
\[
\mathbb{P} \left[ (1+\varepsilon_{2}) C(0) \geq C(s) \geq (1-\varepsilon_{1}) C(0)\;\;\mbox{for all $s \in \{1,2,\ldots,t\}$} \right] \geq\beta.
\]
However, the maximal integer time would depend on a mortality model and would not yield a general result, like the one in this paper.

%------------------------------------------------------------
\subsection{The main results}\label{subsection:main-theorem}
%------------------------------------------------------------
\begin{thm} \label{theorem:main}
Let $U_{(1)}, U_{(2)}, \ldots, U_{(N)}$ be the order statistics of $N$ independent and standard uniformly distributed random variables $U_{1}, U_{2}, \ldots, U_{N}$.  Similarly, denote the order statistics of the independent and identically distributed future lifetime random variables $(T_{i})_{i=1}^{N}$, which have a continuous distribution, by $T_{(1)}, T_{(2)}, \ldots, T_{(N)}$.  Fix constants $\varepsilon_1 \in (0,1)$, $\varepsilon_2>0$ and $k \in \{1,2,\ldots,N \}$.  Then
    \[
    \begin{split}
    & \mathbb{P} \left[ (1+\varepsilon_2) C(0) \geq C(s) \geq (1-\varepsilon_1) C(0)\;\;\mbox{for all $s \in \{1,2,\ldots, \lfloor T_{(k)} \rfloor \}$} \right] \\
    \geq & \mathbb{P} \left[(1-\varepsilon_1) \tfrac{i-1}{N} + \varepsilon_1 \geq U_{(i)} \geq (1+\varepsilon_2) \tfrac{\min\{i,N-1\}}{N} - \varepsilon_2\;\;\mbox{for all $i\in\{1,2,\ldots, k\}$}\right],
    \end{split}
    \]
		in which $C(s)$ is the income at time $s$ that is calculated via equation (\ref{eqn:incomecalculation}).
\end{thm}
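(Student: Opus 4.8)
The plan is to collapse the income recursion into a closed form, recast stability as a condition on the empirical survivor counts, and then transport that condition to the uniform order statistics by the inverse transform method, so that the right-hand event emerges as a subset of the left-hand event. To begin, I would telescope \eqref{eqn:incomewithoutM}: using the actuarial identity ${}_s p_x=\prod_{j=0}^{s-1}p_{x+j}$ together with $\prod_{j=0}^{s-1}\hat p_{x+j}=\prod_{j=0}^{s-1}L_{x+j+1}/L_{x+j}=L_{x+s}/N$ (a product that collapses because $L_x=N$ and all intermediate counts are positive before the $k$th death), the income becomes
\[
C(s)=C(0)\,\frac{N\,{}_s p_x}{L_{x+s}}.
\]
Hence $(1-\varepsilon_1)C(0)\le C(s)\le(1+\varepsilon_2)C(0)$ is equivalent to $(1-\varepsilon_1)\le N\,{}_s p_x/L_{x+s}\le(1+\varepsilon_2)$, in which the mortality model enters only through the scalar ${}_s p_x=\mathbb{P}[T_i>s]$.

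Next I would apply the inverse transform. Writing $F$ for the common distribution function of the $T_i$ and $U_i:=F(T_i)$, the $U_i$ are independent standard uniforms, $U_{(i)}=F(T_{(i)})$ since $F$ is non-decreasing, and ${}_s p_x=1-F(s)$. Put $u_s:=F(s)$. Because $F$ is continuous, ties occur with probability zero, so almost surely $T_i>s\iff U_i>u_s$ and therefore $L_{x+s}=\#\{i:U_i>u_s\}$. Consequently the block on which exactly $i-1$ members have died, i.e. $\{L_{x+s}=N-i+1\}$, coincides almost surely with $\{U_{(i-1)}\le u_s<U_{(i)}\}$, under the convention $U_{(0)}=0$. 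Substituting $L_{x+s}=N-i+1$ and ${}_s p_x=1-u_s$ into the reformulated stability condition and solving the two inequalities for $u_s$ turns stability on this block into
\[
(1+\varepsilon_2)\tfrac{i-1}{N}-\varepsilon_2\;\le\;u_s\;\le\;(1-\varepsilon_1)\tfrac{i-1}{N}+\varepsilon_1.
\]

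The core of the argument is the resulting set inclusion. Since $T_{(k)}$ is continuous, almost surely $\lfloor T_{(k)}\rfloor<T_{(k)}$, so every integer $s\in\{1,\dots,\lfloor T_{(k)}\rfloor\}$ precedes the $k$th death and thus lies in a block with index $i\in\{1,\dots,k\}$. Assuming the right-hand event, the upper constraint at index $i$ gives $u_s<U_{(i)}\le(1-\varepsilon_1)\tfrac{i-1}{N}+\varepsilon_1$, while the lower constraint at index $i-1$ gives $u_s\ge U_{(i-1)}\ge(1+\varepsilon_2)\tfrac{i-1}{N}-\varepsilon_2$ (the block $i=1$ lower bound being the trivial $U_{(0)}=0\ge-\varepsilon_2$). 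The displayed per-block inequality therefore holds for every relevant $s$, so the right-hand event is contained in the left-hand event and the probability bound follows from monotonicity of $\mathbb{P}$.

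The step I expect to be the main obstacle is the index bookkeeping at the two boundaries: one must pair the lower income threshold on block $i$ with the order statistic $U_{(i-1)}$ rather than $U_{(i)}$, and account for the $\min\{i,N-1\}$. That cap only affects the top index $i=N$, where it replaces the degenerate demand $U_{(N)}\ge1$ by the finite bound inherited from block $N$; since the inclusion above invokes lower constraints only at indices $i-1\le k-1\le N-1$, where $\min\{i-1,N-1\}=i-1$, the cap never enters the inclusion and merely shrinks the right-hand event, leaving the inequality intact.
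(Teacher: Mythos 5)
Your proposal is correct and follows essentially the same route as the paper: telescope the income recursion to $C(s)=C(0)\,{}_{s}p_{x}/{}_{s}\hat{p}_{x}$, decompose time into the blocks $[T_{(i-1)},T_{(i)})$ between consecutive order statistics, and transport the stability condition to uniform order statistics via the inverse transform $U_{(i)}=F(T_{(i)})$. The only substantive difference is bookkeeping: the paper bounds the discrete-time event by infimum/supremum events over the continuous intervals $[0,T_{(k)})$ and $[0,T_{(k+1)})$ and locates the extrema at block endpoints, whereas you verify the per-block condition directly at the integer times, pairing the lower constraint on block $i$ with $U_{(i-1)}$ --- which also makes explicit that the lower constraint at index $k$ (and the $\min\{i,N-1\}$ cap) merely shrinks the right-hand event and is never invoked in the inclusion.
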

\begin{proof}
Fix $t \in \mathbb{N}$ and assume that at least one person is alive at age $x+t+1$, i.e. $L_{x+t+1} > 0$.  Starting with equation (\ref{eqn:incomewithoutM}), it follows by induction on $t \in \mathbb{N}$ that
\[
 C(t+1)= C(0)\,\prod_{j=0}^{t}\frac{p_{x+j}}{\hat{p}_{x+j}} = C(0)\,\frac{{}_{t+1} p_{x}}{{}_{t+1}\hat{p}_{x}}.
\]
As the joint distribution of $T_{1}, T_{2}, \ldots, T_{N}$ is continuous, the set $[\lfloor T_{(k)} \rfloor<T_{(k)}, \, \textrm{for $k= 1,2,\ldots, N$}]$ has measure one.  Noting that ${}_{s}\hat{p}_{x}>0$ for $s \in \{1,2,\ldots, \lfloor T_{(k)} \rfloor \}$ on this set and working only on this set, it follows that
\[
\begin{split}
 & \left[ (1+\varepsilon_2) C(0) \geq C(s) \geq (1-\varepsilon_1) C(0)\;\;\mbox{for all $s \in \{1,2,\ldots, \lfloor T_{(k)} \rfloor \}$} \right] \\
= & \left[ 1+\varepsilon_2 \geq \frac{{}_{s}p_{x}}{{}_{s}\hat{p}_{x}} \geq 1-\varepsilon_1 \;\;\mbox{for all $s \in \{1,2,\ldots, \lfloor T_{(k)} \rfloor \}$} \right] \\
= & \left[ \inf_{s \in \{ 1,2,\ldots, \lfloor T_{(k)} \rfloor  \} } \frac{{}_{s}p_{x}}{{}_{s}\hat{p}_{x}} \geq 1-\varepsilon_1 \right] \cap \left[ \sup_{s \in \{ 1,2,\ldots, \lfloor T_{(k)} \rfloor  \} } \frac{{}_{s}p_{x}}{{}_{s}\hat{p}_{x}} \leq 1+\varepsilon_2 \right].
\end{split}
\]

Let both $T_{(N+1)}:=\infty$ and ${}_{s}p_{x}/{}_{s}\hat{p}_{x}:=1$, if ${}_{s}\hat{p}_{x}=0$.
Since $\{ 1,2,\ldots, \lfloor T_{(k)} \rfloor \} \subset [0,T_{(k)}) \subset [0,T_{(k+1)})$, it follows
\[
\begin{split}
\inf_{s \in [0,T_{(k)}) } \frac{{}_{s}p_{x}}{{}_{s}\hat{p}_{x}} \qquad &\leq \qquad \inf_{s \in \{ 1,2,\ldots, \lfloor T_{(k)} \rfloor \} } \frac{{}_{s}p_{x}}{{}_{s}\hat{p}_{x}},
\\
\sup_{s \in [0,T_{(k+1)}) } \frac{{}_{s}p_{x}}{{}_{s}\hat{p}_{x}} \qquad &\geq \qquad \sup_{s \in \{ 1,2,\ldots, \lfloor T_{(k)} \rfloor  \} } \frac{{}_{s}p_{x}}{{}_{s}\hat{p}_{x}},
\end{split}
\]
and so
\[
\begin{split}
\inf_{s \in [0,T_{(k)}) } \frac{{}_{s}p_{x}}{{}_{s}\hat{p}_{x}} \geq 1-\varepsilon_1 \qquad &\Rightarrow \qquad \inf_{s \in \{ 1,2,\ldots, \lfloor T_{(k)} \rfloor  \} } \frac{{}_{s}p_{x}}{{}_{s}\hat{p}_{x}} \geq 1-\varepsilon_1,
\\
\sup_{s \in [0,T_{(k+1)}) } \frac{{}_{s}p_{x}}{{}_{s}\hat{p}_{x}} \leq 1+\varepsilon_2 \qquad &\Rightarrow \qquad \sup_{s \in \{ 1,2,\ldots, \lfloor T_{(k)} \rfloor  \} } \frac{{}_{s}p_{x}}{{}_{s}\hat{p}_{x}} \leq 1+\varepsilon_2.
\end{split}
\]
Therefore
\[
\begin{split}
& \left[ \inf_{s \in [0,T_{(k)}) } \frac{{}_{s}p_{x}}{{}_{s}\hat{p}_{x}} \geq 1-\varepsilon_1 \right]  \cap \left[ \sup_{s \in [0,T_{(k+1)}) } \frac{{}_{s}p_{x}}{{}_{s}\hat{p}_{x}} \leq 1+\varepsilon_2 \right] 
\\
\subset
& \left[ \inf_{s \in \{ 1,2,\ldots, \lfloor T_{(k)} \rfloor \} } \frac{{}_{s}p_{x}}{{}_{s}\hat{p}_{x}} \geq 1-\varepsilon_1 \right] \cap  \left[ \sup_{s \in \{ 1,2,\ldots, \lfloor T_{(k)} \rfloor \} } \frac{{}_{s}p_{x}}{{}_{s}\hat{p}_{x}} \leq 1+\varepsilon_2 \right]
.
\end{split}
\]
In summary,
\begin{equation} \label{EQNfirstthmiii}
\begin{split}
& \left[ \inf_{s \in [0,T_{(k)}) } \frac{{}_{s}p_{x}}{{}_{s}\hat{p}_{x}} \geq 1-\varepsilon_1 \right]  \cap \left[ \sup_{s \in [0,T_{(k+1)}) } \frac{{}_{s}p_{x}}{{}_{s}\hat{p}_{x}} \leq 1+\varepsilon_2 \right]
\\
\subset & \left[ (1+\varepsilon_2) C(0) \geq C(s) \geq (1-\varepsilon_1) C(0)\;\;\mbox{for all $s \in \{1,2,\ldots, \lfloor T_{(k)} \rfloor \}$} \right].
\end{split}
\end{equation}

Now the goal is to write the left-hand side of (\ref{EQNfirstthmiii}) in terms of the order statistics of independent, standard uniformly distributed random variables.   This is done by partitioning $[0,T_{(k)})$ into intervals $[T_{(i-1)}, T_{(i)})$ and considering the minimum value of ${}_{s}p_{x} / {}_{s}\hat{p}_{x}$ over each sub-interval $[T_{(i-1)}, T_{(i)})$.  

The empirical distribution function of the proportion of the initial membership who have died up to time $t \geq 0$ is defined as
\[
\hat{F}_{N} (t) := \frac{1}{N} \sum_{i=1}^{N} \mathbbm{1}_{\{T_i \leq t\}} = \frac{1}{N} \max\{i : T_{(i)} \leq t \}.
\]
Denote by $F$ the distribution function of the death times $T_{1}, T_{2}, \ldots, T_{N}$.  Let $(1-F(s))/(1-\hat{F}_{N}(s)):=1$ if $1-\hat{F}_{N}(s)=0$.  It follows immediately from the definition of the (empirical) survival probability that ${}_{s}\hat{p}_{x} = 1-\hat{F}_{N}(s)$ and ${}_{s}p_{x} = 1-F(s)$. Hence

\begin{equation} \label{EQNsecondthmiii}
\begin{split}
& \left[ \inf_{s \in [0,T_{(k)}) } \frac{{}_{s}p_{x}}{{}_{s}\hat{p}_{x}} \geq 1-\varepsilon_1 \right] \cap \left[ \sup_{s \in [0,T_{(k+1)}) } \frac{{}_{s}p_{x}}{{}_{s}\hat{p}_{x}} \leq 1+\varepsilon_2 \right] 
\\
 = & \left[ \inf_{s \in [0,T_{(k)}) } \frac{1-F(s)}{1-\hat{F}_{N}(s)} \geq 1-\varepsilon_1 \right] \cap \left[ \sup_{s \in [0,T_{(k+1)}) } \frac{1-F(s)}{1-\hat{F}_{N}(s)} \leq 1+\varepsilon_2 \right].
\end{split}
\end{equation}

Let $T_{(0)}:=0$. As the joint distribution of $T_{1}, T_{2}, \ldots, T_{N}$ is continuous, the set $[T_{(i-1)}<T_{(i)}, \, \textrm{for $i= 1,2,\ldots, N+1$}]$ has measure one.  In the following, we work on this set only.

Consider an arbitrary time interval, $[T_{(i-1)}, T_{(i)})$.  The empirical distribution function $\hat{F}_{N}$ changes value only when a member dies, namely only at times $T_{(1)}, \ldots, T_{(N)}$.  In particular, for all $s \in [T_{(i-1)}, T_{(i)})$, $\hat{F}_{N}(s)=\hat{F}_{N}(T_{(i-1)})=\hat{F}_{N}(T_{(i)}-)$, the left limit of $\hat{F}_{N}$ at $T_{(i)}$.

As the distribution function $F$ of the death time is an increasing function, it follows that $1-F(s)$ is a decreasing function that approaches its infimum over $s \in[T_{(i-1)},  T_{(i)})$ at the end of the interval.  Thus, as $\hat{F}_{N}(s)$ is constant over the same interval, the fraction $(1-F(s))/(1-\hat{F}_{N}(s))$ approaches its infimum over $s \in [T_{(i-1)},  T_{(i)})$ as $s$ approaches $T_{(i)}$.  By the continuity of $F$, $F(T_{(i)}-)=F(T_{(i)})$ so that
\[
\inf_{s \in [T_{(i-1)},  T_{(i)}) } \frac{1-F(s)}{1-\hat{F}_{N}(s)} = \frac{1-F(T_{(i)}-)}{1-\hat{F}_{N} (T_{(i)}-)} = \frac{1-F(T_{(i)})}{1-\hat{F}_{N} (T_{(i-1)})}.
\]

On the other hand, as $1-F(s)$ is a decreasing function, it attains its largest value over $s \in [T_{(i-1)}, T_{(i)})$ at the start of the interval. Thus the fraction $(1-F(s))/(1-\hat{F}_{N}(s))$ attains its largest value over $s \in [T_{(i-1)},  T_{(i)})$ at $s=T_{(i-1)}$, i.e.
\[
\sup_{s \in [T_{(i-1)},  T_{(i)}) } \frac{1-F(s)}{1-\hat{F}_{N}(s)} = \frac{1-F(T_{(i-1)})}{1-\hat{F}_{N} (T_{(i-1)})}.
\]
The argument above shows that the infimum of $(1-F(s))/(1-\hat{F}_{N}(s))$ over $s \in [0,T_{(k)})$ is equal to the infimum of $(1-F(T_{(i)}))/(1-\hat{F}_{N} (T_{(i-1)}))$ over $i\in \{1,2,\ldots, k\}$ and that the supremum of $(1-F(s))/(1-\hat{F}_{N}(s))$ over $s \in [0,T_{(k+1)})$ is equal to the supremum of $(1-F(T_{(i-1)}))/(1-\hat{F}_{N} (T_{(i-1)}))$ over $i\in \{1,2,\ldots, k+1\}$.  In summary,
\begin{equation} \label{EQNthirdthm}
\left[ \inf_{s \in [0,T_{(k)}) } \frac{1-F(s)}{1-\hat{F}_{N}(s)} \geq 1-\varepsilon_1 \right] = \left[ \inf_{i \in \{1,2,\ldots, k\}} \frac{1-F(T_{(i)})}{1-\hat{F}_{N} (T_{(i-1)})} \geq 1-\varepsilon_1 \right],
\end{equation}
and, using $T_{(0)}:=0$,
\begin{equation} \label{EQNthirdthmiii}
\begin{split}
\left[ \sup_{s \in [0,T_{(k+1)}) } \frac{1-F(s)}{1-\hat{F}_{N}(s)} \leq 1+\varepsilon_2 \right] = & \left[ \sup_{i \in \{1,2,\ldots, k+1\}} \frac{1-F(T_{(i-1)})}{1-\hat{F}_{N} (T_{(i-1)})} \leq 1+\varepsilon _2\right] \\
= & \left[ \sup_{i \in \{2,3,\ldots, k+1\}} \frac{1-F(T_{(i-1)})}{1-\hat{F}_{N} (T_{(i-1)})} \leq 1+\varepsilon_2 \right] \\
= & \left[ \sup_{i \in \{1,2,\ldots, k\}} \frac{1-F(T_{(i)})}{1-\hat{F}_{N} (T_{(i)})} \leq 1+\varepsilon_2 \right].
\end{split}
\end{equation}

As $F$ is continuous, the random variables $U_{i}=F(T_{i})$ for $i \in \{1,2,\ldots,N\}$ are independent and standard uniformly distributed.  Their order statistics, $(U_{(i)})_{i=1}^{N}$, are linked to the order statistics of the future lifetime random variables by the identity
\[
 U_{(i)}=F(T_{(i)})\quad\mbox{for $i = 1, 2, \ldots, N$}.
\]
Furthermore, by the definition of the empirical distribution function, $\hat{F}_{N} (T_{(i-1)}) = (i-1)/N$, for $i=1,2,\ldots,N$ with $\hat{F}_{N} (T_{(0)}) = \hat{F}_{N}(0) = 0$.  Let $(1-U_{(N)})/0:=1$, then
    \[
    \begin{split}
    & \left[ \inf_{i \in \{1,2,\ldots, k\}} \frac{1-F(T_{(i)})}{1-\hat{F}_{N} (T_{(i-1)})} \geq 1-\varepsilon_1 \right] \cap \left[ \sup_{i \in \{1,2,\ldots, k\}} \frac{1-F(T_{(i)})}{1-\hat{F}_{N} (T_{(i)})} \leq 1+\varepsilon_2 \right]
    \\
    = & \left[ \inf_{i \in \{1,2,\ldots, k\}} \frac{1-U_{(i)}}{1-(i-1)/N} \geq 1-\varepsilon_1 \right] \cap \left[ \sup_{i \in \{1,2,\ldots, k\}} \frac{1-U_{(i)}}{1-i/N} \leq 1+\varepsilon_2 \right] 
    \\
    = & \left[ (1-\varepsilon_1)\tfrac{i-1}{N} + \varepsilon_1 \geq U_{(i)} \;\;\mbox{for all $i \in \{1,2,\ldots, k\}$} \right] 
    \\ & \cap \left\{ (1+\varepsilon_2)\tfrac{i}{N} - \varepsilon_2\leq U_{(i)} \;\;\mbox{for all $i \in \{1,2,\ldots, k\}\setminus\{N\}$} \right] \\
    = & \left[ (1-\varepsilon_1)\tfrac{i-1}{N} + \varepsilon_1 \geq U_{(i)} \geq (1+\varepsilon_2)\tfrac{
\min\{i,N-1\}
    }{N} - \varepsilon_2 \;\;\mbox{for all $i \in \{1,2,\ldots, k\}$} \right].
    \end{split}
    \]

Combining the last equation with equations (\ref{EQNfirstthmiii})-(\ref{EQNthirdthmiii}) and taking the probability, the desired result is obtained.
\end{proof}

In the sequel, the focus is on either a symmetric income threshold -- representing a desire to avoid both upside and downside income volatility -- or a lower income threshold only -- representing a desire to avoid downside income volatility. 

\begin{cor} \label{cor:mainII}
Suppose that for $k\in\{1,2,\ldots,N\}$ and $\varepsilon \in (0,1)$,
    \begin{equation}\label{eq:probupperandlowerbound}
     \mathbb{P} \left[ (1-\varepsilon)\tfrac{i-1}{N} + \varepsilon \geq U_{(i)} \geq (1+\varepsilon)\tfrac{\min\{i,N-1\}}{N} - \varepsilon \;\;\mbox{for all $i \in \{1,2,\ldots, k\}$} \right] \geq \beta.
    \end{equation}
Then
    \begin{equation}\label{eq:probupperandlowerthreshold}
 \mathbb{P} \left[  (1+\varepsilon) C(0) \geq C(s) \geq (1-\varepsilon) C(0)\;\;\mbox{for all $s \in \{1,2,\ldots, \lfloor T_{(k)} \rfloor \}$} \right] \geq \beta.
   \end{equation}
\end{cor}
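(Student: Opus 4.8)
The plan is to read the corollary as nothing more than the symmetric special case of Theorem~\ref{theorem:main}. The two probabilities appearing in the corollary -- the one assumed bounded below by $\beta$ in hypothesis~(\ref{eq:probupperandlowerbound}) and the one to be bounded below by $\beta$ in conclusion~(\ref{eq:probupperandlowerthreshold}) -- are exactly the right-hand and left-hand sides of the theorem's inequality once the common threshold $\varepsilon_1 = \varepsilon_2 = \varepsilon$ is chosen. So the whole argument is a substitution followed by a transitivity step, and no new probabilistic content is needed.

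First I would invoke Theorem~\ref{theorem:main} with $\varepsilon_1 = \varepsilon$ and $\varepsilon_2 = \varepsilon$. This substitution is legitimate because the theorem only requires $\varepsilon_1 \in (0,1)$ and $\varepsilon_2 > 0$, and the corollary's standing assumption $\varepsilon \in (0,1)$ supplies both. The theorem then delivers directly
\[
\begin{split}
& \mathbb{P} \left[ (1+\varepsilon) C(0) \geq C(s) \geq (1-\varepsilon) C(0)\;\;\mbox{for all $s \in \{1,2,\ldots, \lfloor T_{(k)} \rfloor \}$} \right] \\
\geq & \mathbb{P} \left[(1-\varepsilon) \tfrac{i-1}{N} + \varepsilon \geq U_{(i)} \geq (1+\varepsilon) \tfrac{\min\{i,N-1\}}{N} - \varepsilon\;\;\mbox{for all $i\in\{1,2,\ldots, k\}$}\right],
\end{split}
\]
whose right-hand side is literally the probability on the left of~(\ref{eq:probupperandlowerbound}).

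Next I would chain this bound with the hypothesis. Assumption~(\ref{eq:probupperandlowerbound}) states that the right-hand side of the displayed inequality is at least $\beta$; combining the two by transitivity of $\geq$ shows that the left-hand side -- which is precisely the quantity in~(\ref{eq:probupperandlowerthreshold}) -- is also at least $\beta$. This closes the argument in one line.

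The hard part, honestly, is that there is no hard part: the corollary is a pure specialization of the theorem, not an independent result. The only point demanding even minimal care is verifying that the symmetric choice $\varepsilon_1 = \varepsilon_2 = \varepsilon$ lies in the theorem's admissible parameter range, and this is immediate from $\varepsilon \in (0,1)$. I would therefore expect the write-up to consist essentially of the single substitution and the transitivity step above.
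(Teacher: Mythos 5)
Your proposal is correct and matches the paper's proof exactly: the paper likewise proves the corollary by applying Theorem~\ref{theorem:main} with $\varepsilon = \varepsilon_1 = \varepsilon_2$, with the transitivity step left implicit. Your write-up merely spells out the substitution and chaining in more detail.
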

\begin{proof}
Apply Theorem \ref{theorem:main} with $\varepsilon=\varepsilon_1=\varepsilon_2$.
\end{proof}

\begin{cor} \label{cor:main}
Suppose that, for $k\in\{1,2,\ldots,N\}$ and $\varepsilon \in (0,1)$,
\begin{equation}\label{eq:problowerbound}
 \mathbb{P} \left[(1-\varepsilon)\tfrac{i-1}{N} + \varepsilon \geq U_{(i)}\;\;\mbox{for all $i\in\{1,2,\ldots, k\}$}\right] \geq \beta.
\end{equation}
Then
  \begin{equation}\label{eq:problowerthreshold}
 \mathbb{P} \left[ C(s) \geq(1-\varepsilon) C(0)\;\;\mbox{for all $s \in \{1,2,\ldots, \lfloor T_{(k)} \rfloor \}$} \right] \geq \beta.
   \end{equation}
\end{cor}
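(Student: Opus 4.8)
The plan is to obtain Corollary \ref{cor:main} as the downside-only analogue of Corollary \ref{cor:mainII}, deriving it from Theorem \ref{theorem:main} by relaxing the upper income threshold to infinity. Intuitively, sending the upper threshold parameter $\varepsilon_2 \to \infty$ makes the constraint $C(s) \leq (1+\varepsilon_2)C(0)$ vacuous and, on the order-statistics side, pushes the lower bound on $U_{(i)}$ down to $-\infty$, so both sides of the inequality in Theorem \ref{theorem:main} collapse onto exactly the one-sided events appearing in (\ref{eq:problowerbound}) and (\ref{eq:problowerthreshold}).

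Concretely, I would fix $\varepsilon_1 = \varepsilon$ and apply Theorem \ref{theorem:main} for an arbitrary $\varepsilon_2 > 0$, writing
\[
A_{\varepsilon_2} := \left[ (1+\varepsilon_2)C(0) \geq C(s) \geq (1-\varepsilon)C(0)\;\;\mbox{for all $s \in \{1,2,\ldots,\lfloor T_{(k)}\rfloor\}$} \right],
\]
\[
B_{\varepsilon_2} := \left[ (1-\varepsilon)\tfrac{i-1}{N} + \varepsilon \geq U_{(i)} \geq (1+\varepsilon_2)\tfrac{\min\{i,N-1\}}{N} - \varepsilon_2\;\;\mbox{for all $i \in \{1,2,\ldots,k\}$} \right],
\]
so that the theorem reads $\mathbb{P}[A_{\varepsilon_2}] \geq \mathbb{P}[B_{\varepsilon_2}]$. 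The next step is to verify that both families are increasing in $\varepsilon_2$. The upper income bound $(1+\varepsilon_2)C(0)$ increases with $\varepsilon_2$, so $A_{\varepsilon_2}$ grows; and the lower bound on $U_{(i)}$ equals $\tfrac{\min\{i,N-1\}}{N} - \varepsilon_2\bigl(1 - \tfrac{\min\{i,N-1\}}{N}\bigr)$, whose coefficient of $\varepsilon_2$ is strictly negative since $\min\{i,N-1\} \leq N-1 < N$ for every $i \in \{1,2,\ldots,k\}$, so this bound decreases and $B_{\varepsilon_2}$ grows as well.

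I would then identify the monotone limits as $\varepsilon_2 \to \infty$. Write $A$ for the event in (\ref{eq:problowerthreshold}) and $B$ for the event in (\ref{eq:problowerbound}). The union $\bigcup_{\varepsilon_2>0} A_{\varepsilon_2}$ agrees with $A$ up to a null set: on the full-measure set $[\lfloor T_{(k)}\rfloor < T_{(k)}]$ used in the proof of Theorem \ref{theorem:main} one has ${}_s\hat p_x > 0$, hence each $C(s)$ is finite for $s \in \{1,2,\ldots,\lfloor T_{(k)}\rfloor\}$, so any outcome meeting the lower threshold for all such $s$ also meets $C(s) \leq (1+\varepsilon_2)C(0)$ once $\varepsilon_2$ is large enough. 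Similarly $\bigcup_{\varepsilon_2>0} B_{\varepsilon_2} = B$, because $U_{(i)} \geq 0$ always while the lower bounds tend to $-\infty$. Continuity of probability from below, applied along any increasing sequence $\varepsilon_2 \to \infty$, then yields $\mathbb{P}[A] = \lim \mathbb{P}[A_{\varepsilon_2}] \geq \lim \mathbb{P}[B_{\varepsilon_2}] = \mathbb{P}[B] \geq \beta$, the final inequality being the hypothesis (\ref{eq:problowerbound}); this is precisely (\ref{eq:problowerthreshold}).

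\textbf{The step I expect to be the main obstacle} is the set identification $\bigcup_{\varepsilon_2} A_{\varepsilon_2} = A$ modulo measure zero, since it hinges on the almost-sure finiteness of $C(s)$ on the relevant index set, which in turn rests on the continuity assumption guaranteeing ${}_s\hat p_x>0$ there. If one prefers to sidestep the limiting argument entirely, an alternative is to re-run only the infimum half of the proof of Theorem \ref{theorem:main}: deleting every supremum line and every appearance of $\varepsilon_2$ produces the set inclusion $B \subset A$ directly on the full-measure set, and taking probabilities gives $\mathbb{P}[A] \geq \mathbb{P}[B] \geq \beta$ at once.
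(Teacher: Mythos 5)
Your proposal is correct and is essentially the paper's own proof: the paper also derives Corollary \ref{cor:main} by applying Theorem \ref{theorem:main} with $\varepsilon_1=\varepsilon$ and letting $\varepsilon_2\uparrow\infty$, observing that $(1+\varepsilon_2)\tfrac{i}{N}-\varepsilon_2=\tfrac{i}{N}-\varepsilon_2\bigl(1-\tfrac{i}{N}\bigr)\to-\infty$ for $i\in\{1,\ldots,N-1\}$. Your continuity-from-below machinery and the null-set identification of $\bigcup_{\varepsilon_2}A_{\varepsilon_2}$ are more than is needed --- since $A_{\varepsilon_2}\subseteq A$ trivially and $U_{(i)}\in[0,1]$ forces $B_{\varepsilon_2}=B$ for all sufficiently large finite $\varepsilon_2$, the conclusion follows from a single large $\varepsilon_2$ without any limit --- but the argument as written is sound and matches the paper's route.
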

\begin{proof}
Apply Theorem \ref{theorem:main} with $\varepsilon=\varepsilon_1$, let $\varepsilon_2\uparrow\infty$ and  observe
\[ (1 + \varepsilon_2) \tfrac{i}{N} - \varepsilon_2 = \tfrac{i}{N} - \varepsilon_2 (1 - \tfrac{i}{N}) \longrightarrow -\infty \quad \mbox{for all $i \in \{ 1,2, \ldots, N-1 \}$}.\]
\end{proof}

The motivation for proving the results in Section \ref{subsection:main-theorem} is that, rather than calculating the maximal integer $k:=k_{C}$ satisfying (\ref{eq:probupperandlowerthreshold}), the maximal integer $k:=k_{U}$ satisfying (\ref{eq:probupperandlowerbound}) can instead be calculated.  The same idea applies for (\ref{eq:problowerbound}) and (\ref{eq:problowerthreshold}).  This has some significant benefits.  

The value of $k_{U}$ is independent of the distribution of $T_1, T_2, \ldots, T_N$ which, if $k_{U}$ is close in value to $k_{C}$, allows general conclusions to be drawn as to the ability of the pooled annuity fund to diversify longevity risk.   Additionally, the calculation of $k_{U}$, which requires only the sampling of random variables, is faster than the calculation of $k_{C}$, which requires the simulation of the income process in the pooled annuity fund.  It is straightforward to show that the calculation of $k_{C}$ requires  the order of $(L+N) \times M$ operations whereas $k_{U}$ takes the order of $N \times M$ operations, where $L$ is the number of times that the income is calculated from time 0 to the maximum time of the last death in the fund, in each simulation.  (For example, if the income was calculated monthly, the participants were age 70 at time 0 and lived until at most age 120 then, at most, $L=600$ time-steps are needed in each simulation to calculate the income paid to the surviving participants.)

Therefore, a key question is how close is the lower bound $k_{U}$ to the true value $k_{C}$?  As is shown next for some chosen mortality models, $k_{U}$ is close enough to $k_{C}$ to use it to make conclusions on the pooled annuity fund's ability to diverse idiosyncratic longevity risk.

%------------------------------------------------------------
\subsection{How close is \pmb{$k_{U}$} to \pmb{$k_{C}$}?}\label{subsection:efficacy}
%------------------------------------------------------------

 To examine how close $k_{U}$ is to $k_{C}$, the maximal integers satisfying each of (\ref{eq:probupperandlowerbound}) to (\ref{eq:problowerthreshold}) individually are calculated; the method for each calculation is described in Sections \ref{subsection:calcku} and \ref{subsection:calckc}, and all simulations were carried out in the statistical software package \emph{R}.

Figure \ref{fig:relative-difference-HMD-IFoA}, which details the extent to which the relative difference between them improves as the number of members $N$ initially in the fund increases, shows that $k_{C}$ is quite close to $k_{U}$.  For example, $k_{C}$ is less than 3\% above the value of $k_{U}$ when $N=2\,000$, and the percentage falls to $2$\% when $N=4\,000$ and falls further to about $1$\% when $N=8\,000$.  These values apply for the two considered life tables and for the two different starting ages, $50$ and $70$ years old.  The relative difference falls as $N$ increases, due to $k_{U}/N$ and $k_{C}/N$ converging to one as $N$ increases.

It is seen from Figure \ref{fig:relative-difference-HMD-IFoA} that the younger group of initially $50$-year-olds has a lower relative error than the older group of initially $70$-year-olds.  The reason is a technical one.  Denote the distribution function of $T_{1}$ for the $50$-year-olds by $F_{50}$ and that for the $70$-year-olds by $F_{70}$, with $F_{50} < F_{70}$.  This inequality implies that for any $y \in (0,1)$ and any given time grid $(t_i)_{i}$, there are more indexes $i$ such that $F_{50}(t_i) < y$ than there are indexes $i$ satisfying $F_{70}(t_i) < y$.  Suppose that for a sequence of uniform order statistics $(U_{(k)})_{k=1}^{N}$, the goal is to check if condition (\ref{eq:probupperandlowerbound}) holds until the $k$th member.  To do this, the condition is checked for all indexes $i$ satisfying $F_{50}(t_i) < U_{(k)}$, and similarly for all indexes $i$ satisfying $F_{70}(t_i) < U_{(k)}$.  Since for the $70$-year-olds there are fewer indices $i$ to check than for $50$-year-olds, the condition is more often fulfilled by the $70$-year-olds, yielding a higher maximal $k_{U}$ for the $70$-year-olds.  This is true even though the values $F_{x}(t_i)$ are different for $x=50$ and $x=70$, since the values are close to each other.  The relative differences are magnified further when $N$ is small, due to the smaller numbers involved.  

A closer look at the results displayed in Figure \ref{fig:relative-difference-HMD-50} shows that the value of the threshold parameter $\varepsilon$ is more important than the value of the certainty $\beta$ in determining the goodness of the approximation.   The larger the value of $\varepsilon$, and therefore the wider are the thresholds, the better is the approximation.  For example, when $\varepsilon=0.1$ and $N=2\,000$, $k_{C}$ is at most 1\% higher than $k_{U}$.  However, the relative difference increases to $3$\% when $\varepsilon=0.05$.  The same observations apply to the other plots in Figure \ref{fig:relative-difference-HMD-IFoA}.

Turning to how these relative differences affect the likely time for which the fund can provide a stable, life-long income, Figure \ref{fig:time-difference-HMD-IFoA} indicates that the difference is at most 10 months.  For funds with at least $2\,000$ members, using $k_{U}$ instead of $k_{C}$ understates the likely time for which a stable, life-long income is provided by between 2 to 4 months.  An explanation of the likely time is given in Section \ref{subsection:confidence}.

The conclusion is that, since it is a close lower bound, $k_{U}$ can be used as an approximation to $k_{C}$.  The calculation methods of $k_{U}$ and $k_{C}$ are described next.

%------------------------------------------------------------
% Figure of (k_{C}-k_{U})/k_{U}
%------------------------------------------------------------

\begin{figure}
\captionsetup[subfigure]{justification=centering} %
    \centering
    \begin{subfigure}[b]{0.45\textwidth} 
        \centering
\includegraphics[trim=0 0.5cm 0 1.5cm,clip,width=\textwidth]{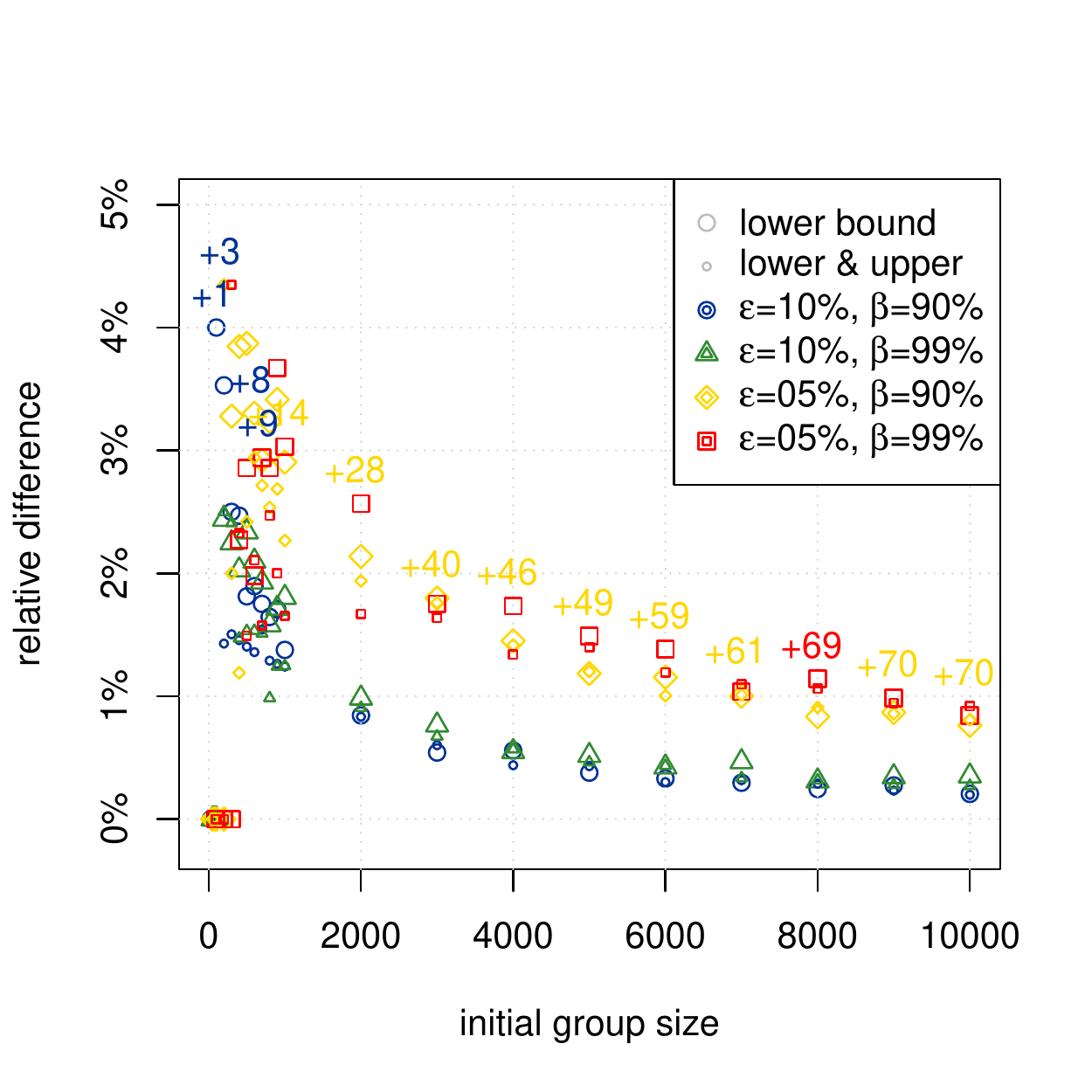}
\caption{Human Mortality Database life table, starting age 50.}
\label{fig:relative-difference-HMD-50}
    \end{subfigure}%
		\begin{subfigure}[b]{0.45\textwidth}
        \centering
\includegraphics[trim=0 0.5cm 0 1.5cm,clip,width=\textwidth]{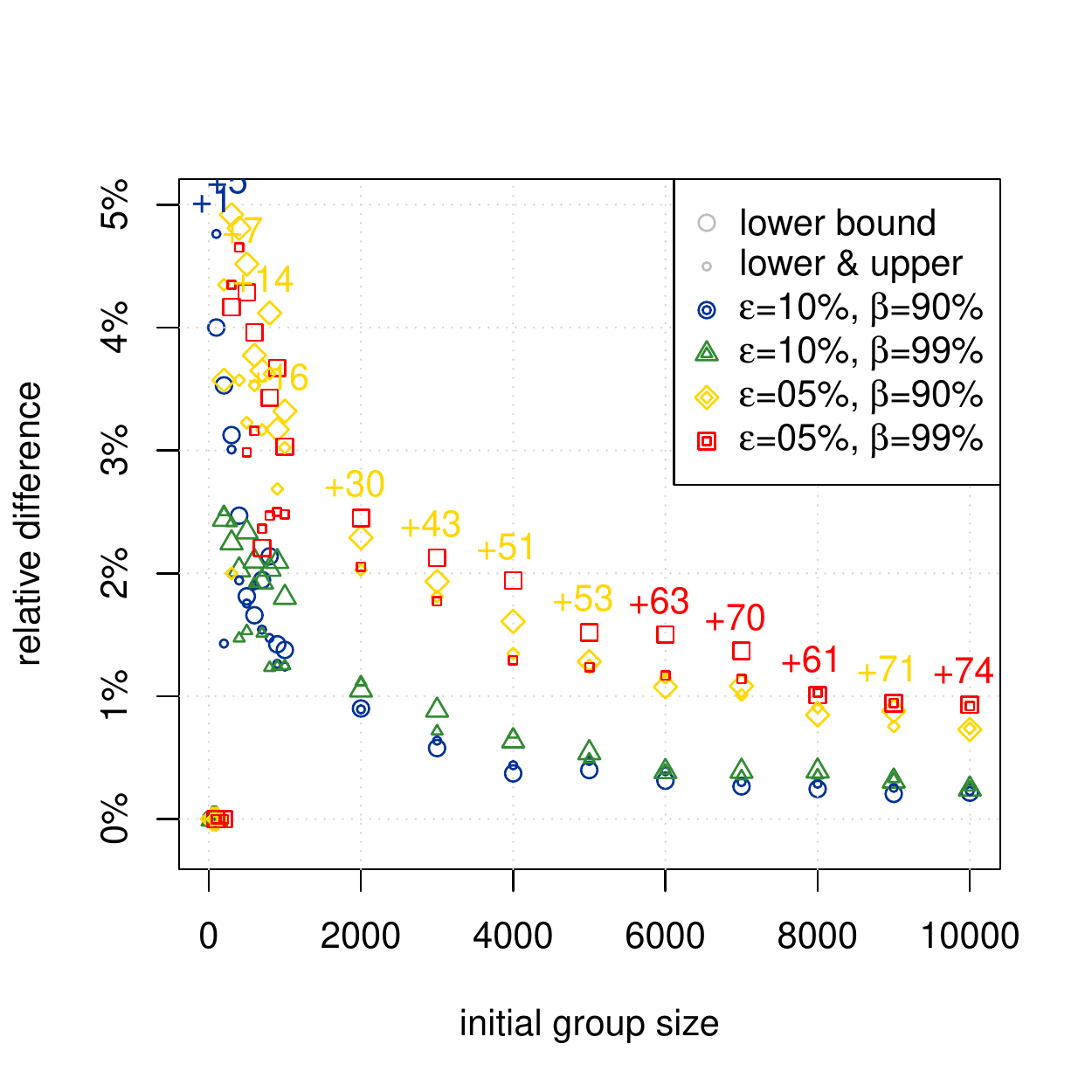}
\caption{Human Mortality Database life table, starting age 70.}
\label{fig:relative-difference-HMD-70}
    \end{subfigure}%
		\vskip\baselineskip %%------------------------------
		    \begin{subfigure}[b]{0.45\textwidth}
        \centering
\includegraphics[trim=0 0.5cm 0 1.5cm,clip,width=\textwidth]{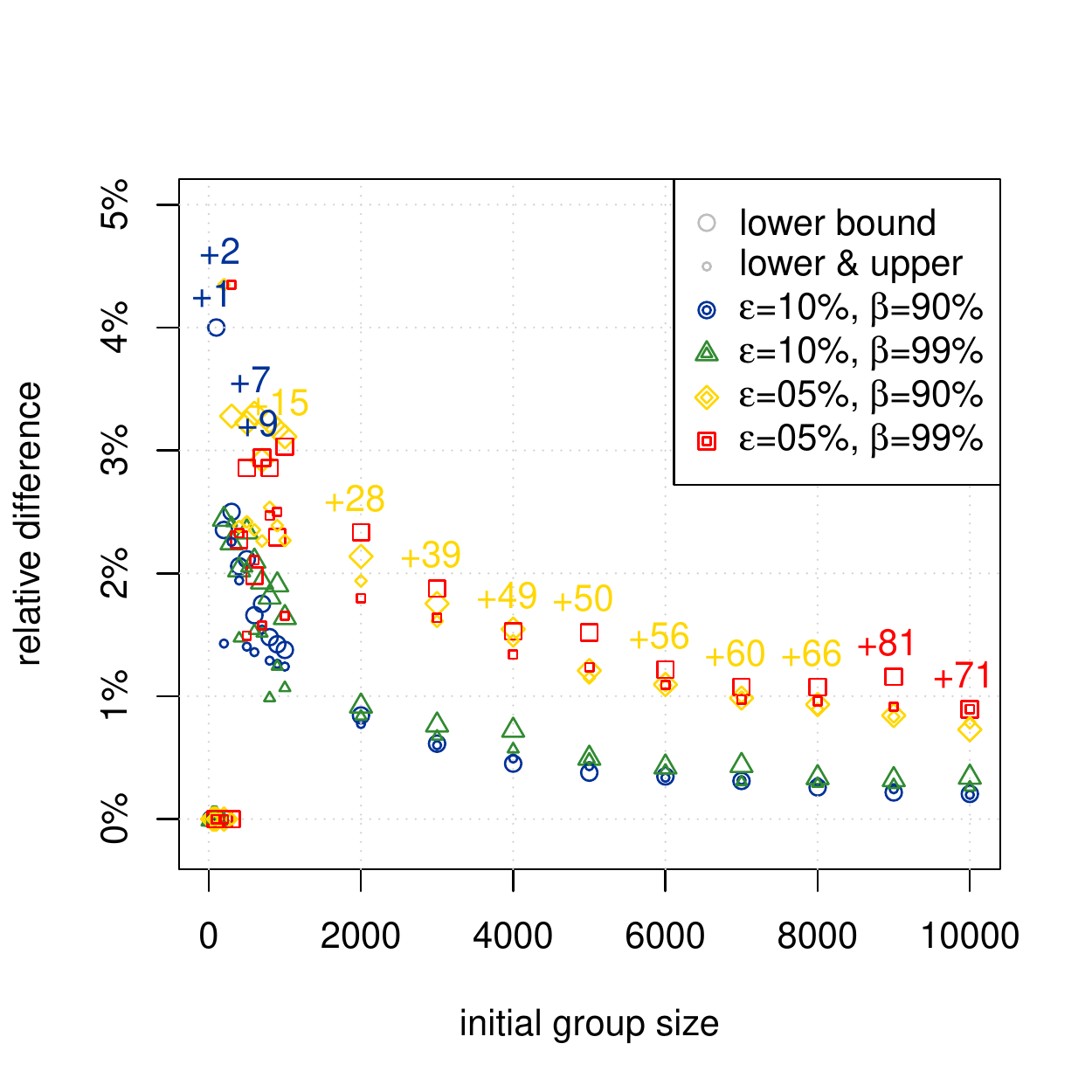}
\caption{Life table S1PFL, starting age 50.}
\label{fig:relative-difference-IFoA-50}
    \end{subfigure}
    \begin{subfigure}[b]{0.45\textwidth}
        \centering
\includegraphics[trim=0 0.5cm 0 1.5cm,clip,width=\textwidth]{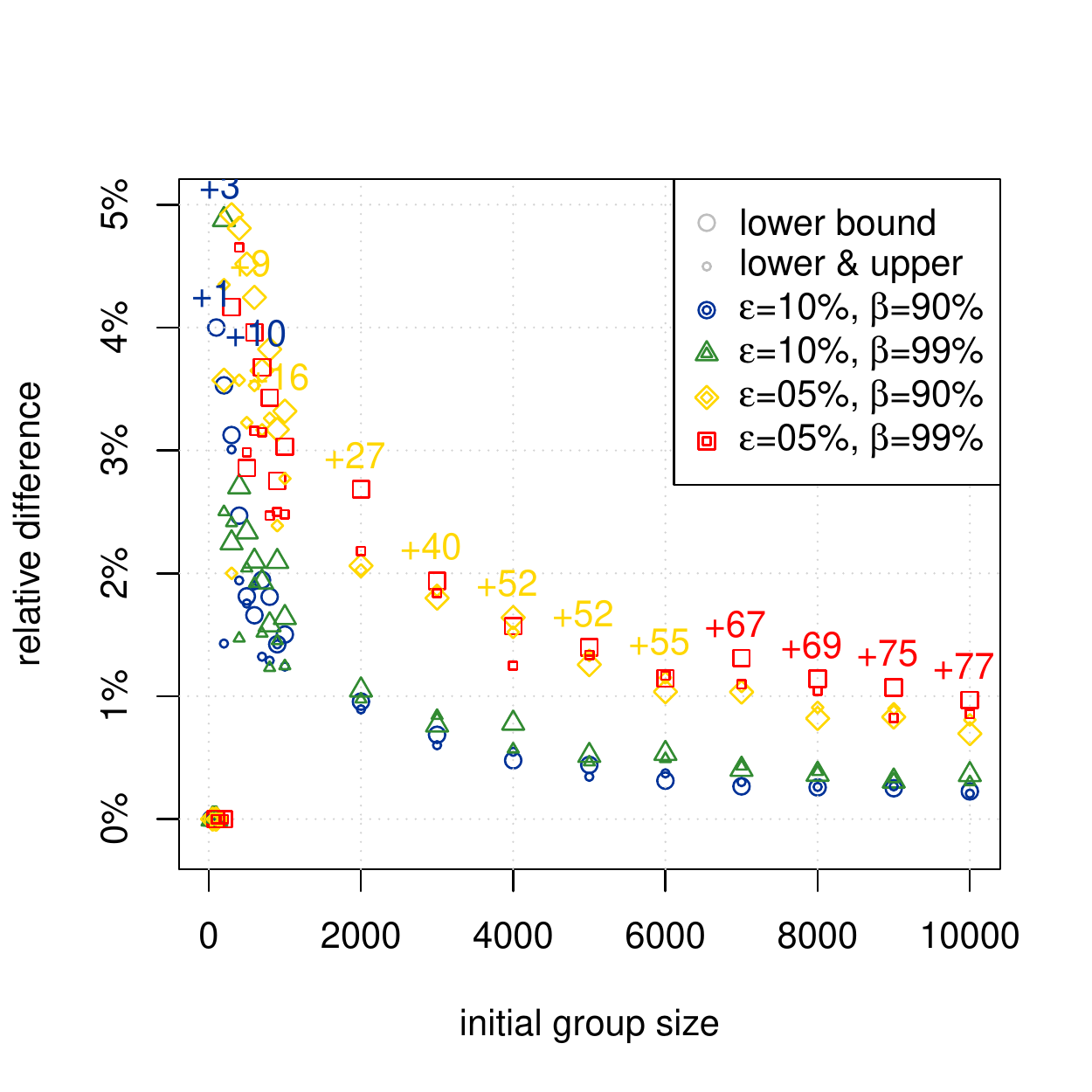}
\caption{Life table S1PFL, starting age 70.}
\label{fig:relative-difference-IFoA-70}
    \end{subfigure}
    \caption{Relative distance of the maximal integer $k:=k_{C}$ fulfilling (\ref{eq:probupperandlowerthreshold}) from the maximal integer $k:=k_{U}$ fulfilling (\ref{eq:probupperandlowerbound}), i.e. $(k_{C}-k_{U})/k_{U}$, as the initial group size increases.  The smallest initial group size is 10.  The relative distances are displayed for a selection of values of $\varepsilon$ and $\beta$, and are indicated by the smaller symbols.  The same calculation is shown for (\ref{eq:problowerthreshold}) and (\ref{eq:problowerbound}), respectively, and indicated by the larger symbols.  The numbers in the plot state the maximum value of $k_{C}-k_{U}$, for each value of $N$, over the eight different combinations of the values of $\beta$ and $\varepsilon$, and whether a lower income threshold only or both income thresholds are included.  For the calculation of $k_{C}$ from (\ref{eq:probupperandlowerthreshold}) and (\ref{eq:problowerthreshold}), it is assumed that income payments are paid monthly to survivors, the initial age is either 50 or 70 years old, and the mortality distribution is based on either the Human Mortality Database' life table for the UK for 2016 \citep{HMD2016} or the UK-based life table S1PFL \citep{IFoA2008}.}
		\label{fig:relative-difference-HMD-IFoA}
\end{figure}

%------------------------------------------------------------
% Figure for the time difference implied by (k_{C}-k_{U})/k_{U}
%------------------------------------------------------------

\begin{figure}
\captionsetup[subfigure]{justification=centering} %
    \centering
    \begin{subfigure}[b]{0.45\textwidth} 
        \centering
\includegraphics[trim=0 0.5cm 0 1.5cm,clip,width=\textwidth]{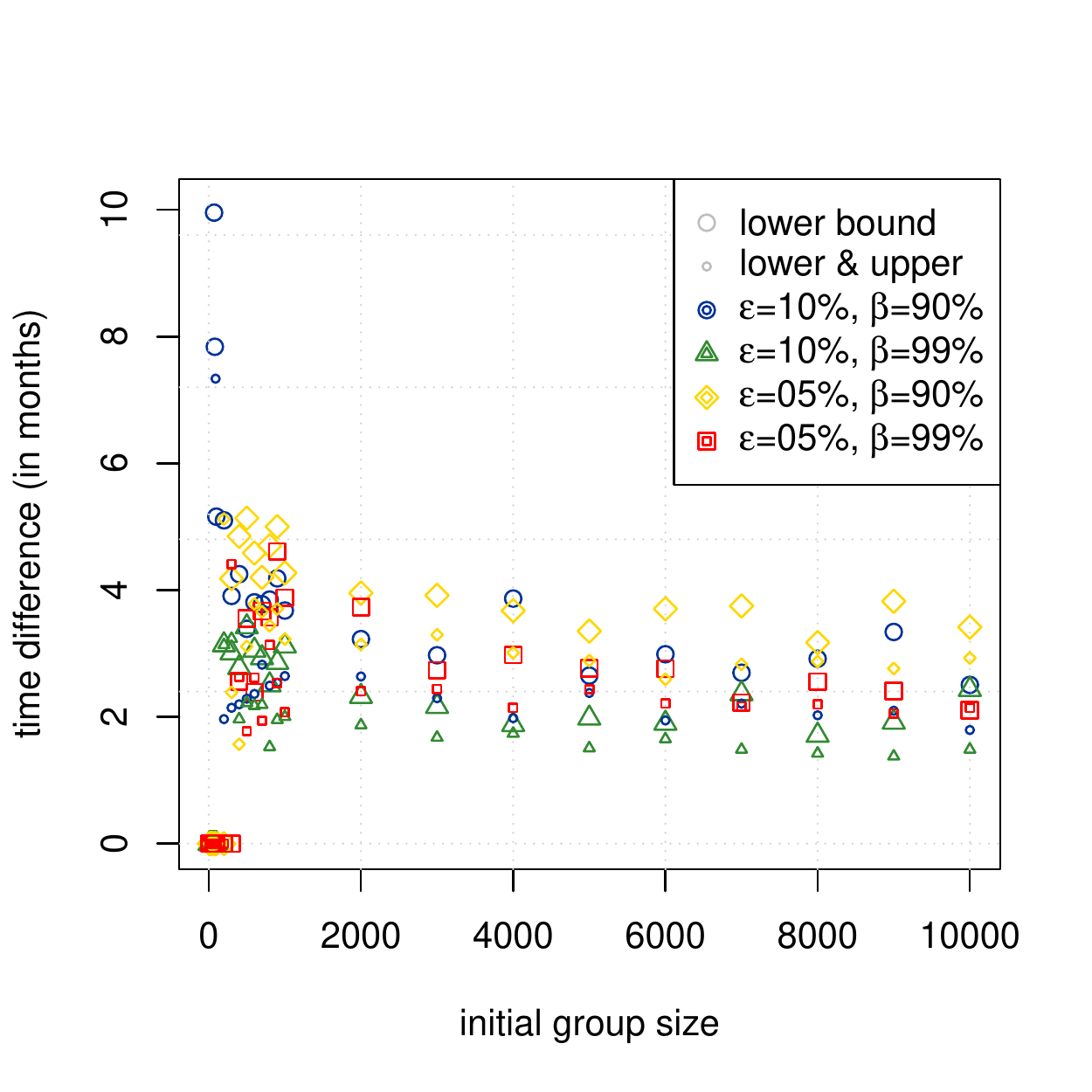}
\caption{Human Mortality Database life table, starting age 50.}
\label{fig:time-difference-HMD-50}
    \end{subfigure}%
		\begin{subfigure}[b]{0.45\textwidth}
        \centering
\includegraphics[trim=0 0.5cm 0 1.5cm,clip,width=\textwidth]{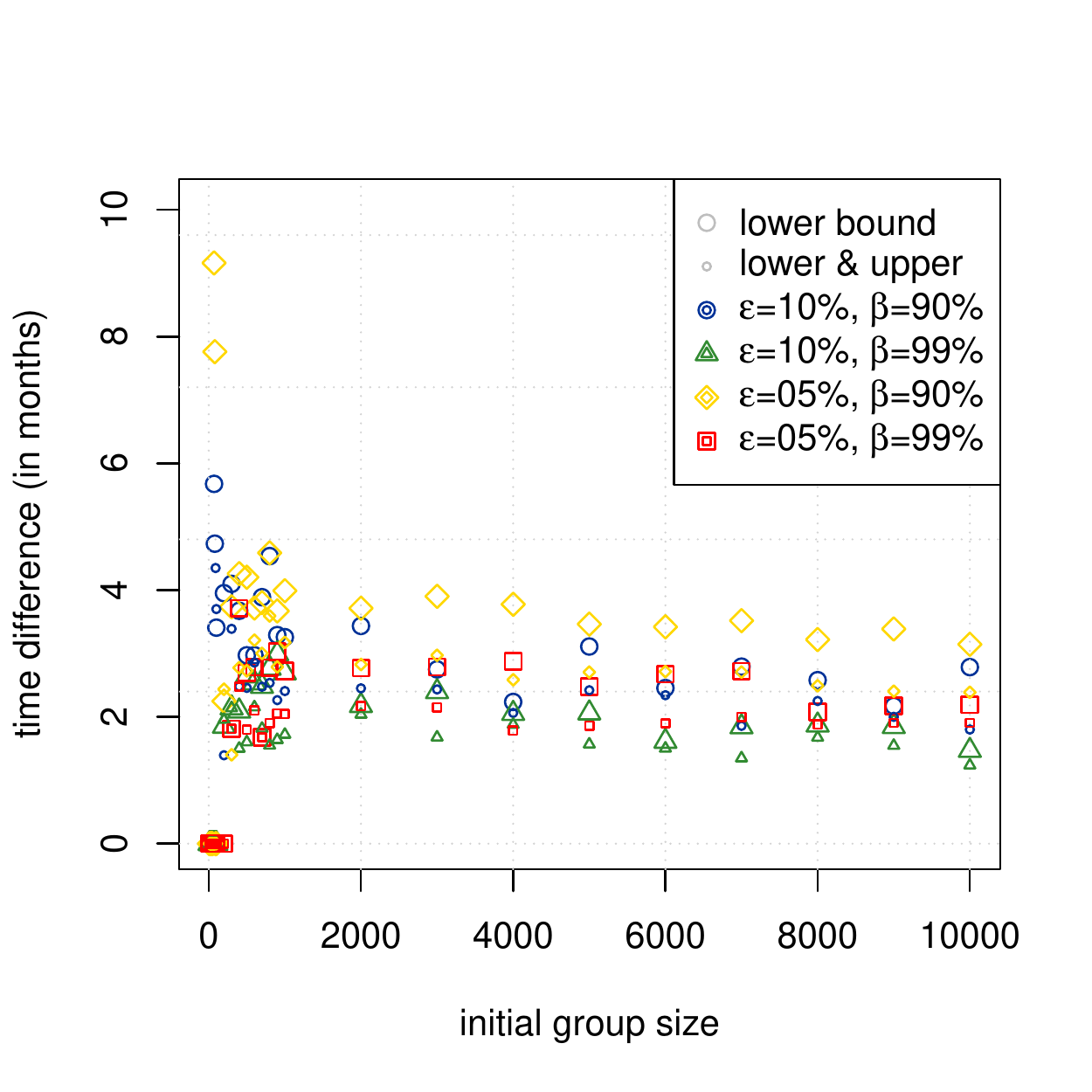}
\caption{Human Mortality Database life table, starting age 70.}
\label{fig:time-difference-HMD-70}
    \end{subfigure}%
		\vskip\baselineskip %%------------------------------
		    \begin{subfigure}[b]{0.45\textwidth}
        \centering
\includegraphics[trim=0 0.5cm 0 1.5cm,clip,width=\textwidth]{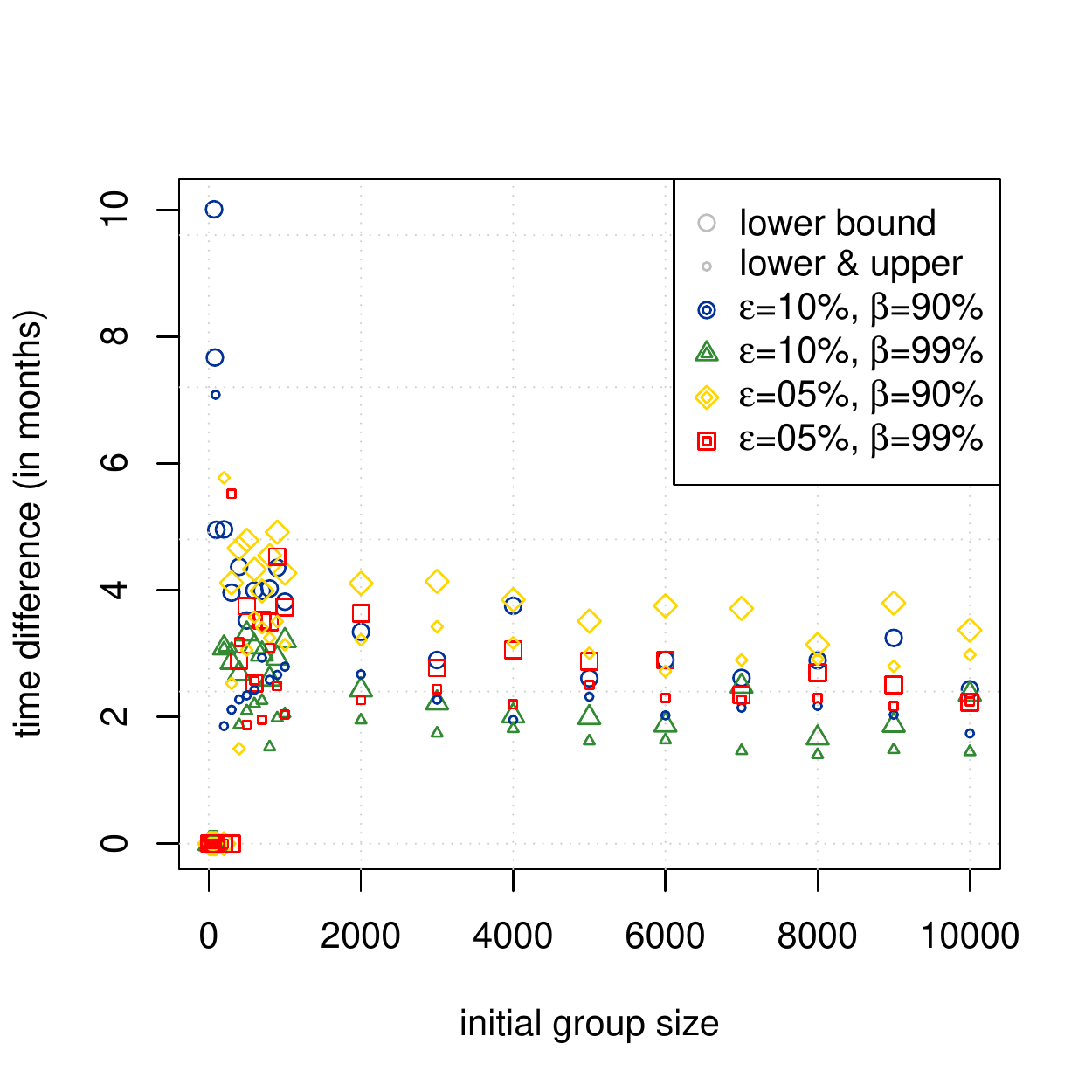}
\caption{Life table S1PFL, starting age 50.}
\label{fig:time-difference-IFoA-50}
    \end{subfigure}
    \begin{subfigure}[b]{0.45\textwidth}
        \centering
\includegraphics[trim=0 0.5cm 0 1.5cm,clip,width=\textwidth]{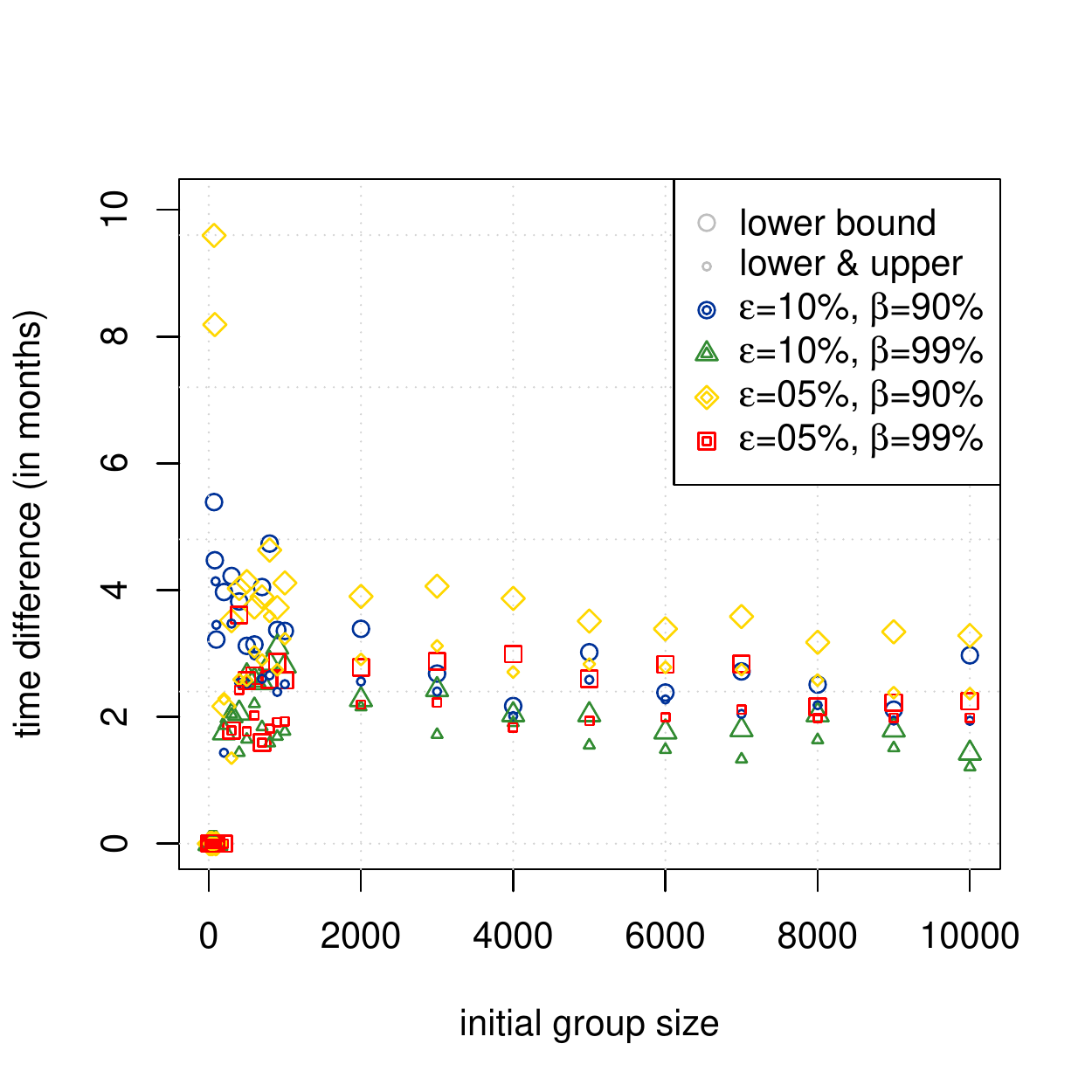}
\caption{Life table S1PFL, starting age 70.}
\label{fig:time-difference-IFoA-70}
    \end{subfigure}
    \caption{Additional likely time for which the fund can provide a stable, life-long income when using the maximal integer $k:=k_{C}$ instead of the maximal integer $k:=k_{U}$, as the initial group size increases.  Thus using $k:=k_{U}$ understates the time for which the fund can provide a stable, life-long income by at most 10 months, for the considered life tables and parameters.  The assumptions are the same as in Figure \ref{fig:relative-difference-HMD-IFoA}.  An explanation of the likely time is given in Section \ref{subsection:confidence}.}
		\label{fig:time-difference-HMD-IFoA}
\end{figure}

%------------------------------------------------------------

%------------------------------------------------------------
\subsubsection{Calculation of the maximal number \pmb{$k_{U}$}}\label{subsection:calcku}
%------------------------------------------------------------
To find the maximal number $k:=k_{U}$ that fulfills either (\ref{eq:probupperandlowerbound}) or (\ref{eq:problowerbound}), Monte Carlo simulation is used.  The initial number of members in the fund, $N$, and the certainty $\beta$ are fixed.  As the income thresholds are symmetric about the initial income, call $\varepsilon:=\varepsilon_{1}=\varepsilon_{2}$ the threshold parameter.  Suppose that $M \in \mathbb{N}$ sample vectors are generated of the uniform order statistics $(U_{(1)}, U_{(2)}, \ldots, U_{(N)})$.

Denote the $m$th sample of the uniform order statistics by $(u^{(m)}_{(1)}, u^{(m)}_{(2)}, \ldots, u^{(m)}_{(N)})$, for $m \in \{1,2,\ldots,M\}$.  For each $m\in \{1,2,\ldots,M\}$, the first integer $i:=i(m) \in \{1, 2, \ldots, N\}$ that fails
\[
	\varepsilon+(1-\varepsilon)(i-1)/N \geq u_{(i)}^{(m)} \geq (1+\varepsilon)\min\{i,N-1\}/N - \varepsilon
\]
is determined and $k(m)=i(m)-1$ is recorded.  If there is no such $i(m)$, then $k(m)=N$ is recorded.

The values $(k(m))_{m=1}^{M}$ are considered as samples from a random variable $K$ and are used to calculate the empirical distribution of $K$.  
Finally, the value of $k_{U}$ is calculated as the $\beta$-quantile of this empirical distribution of $K$.  The same method is used to find the maximal number that fulfills (\ref{eq:problowerbound}).  The final calculated value, $k_{U}$, is the result of $10$ million simulations.

Note that the procedure can be implemented efficiently since the sorting of uniform random variables can be avoided.  According to \citet[f.207]{Devroye1986}, uniform order statistics are ratios of sums of exponential random variables.  More precisely, let $(E_i)_{i=1}^{N+1}$ be independent $\mathrm{Exp}(1)$-distributed random variables and define $S_i=\sum_{j=1}^i E_j$ for integers $\,1\leq j\leq N+1$.  Then the distributions of $(U_{(i)})_{i=1}^N$ and $(S_i/S_{N+1})_{i=1}^N$ are the same.

Table \ref{table:kDelta} lists the maximal number $k_{U}$ as the number of initial members $N$ increases, for a selection of values of $\varepsilon$ and $\beta$, depending on whether either a lower income threshold or a symmetric lower and upper income threshold are applied.  Plotting the maximal numbers (Figure \ref{fig:k-graph}), it can be seen that the value of $k_{U}$ increases approximately linearly for $N \geq 2\,000$.  The linear approximation improves as the income thresholds widen (i.e. as $\varepsilon$ increases) and as the certainty $\beta$ decreases.

\begin{table}
\begin{center}
\begin{tabular}{|r|rr|rr|rr|rr|}
  \hline
  & \multicolumn{2}{c|}{$\varepsilon=10\%$} & \multicolumn{2}{c|}{$\varepsilon=10\%$} & \multicolumn{2}{c|}{$\varepsilon=\hphantom{0}5\%$} & \multicolumn{2}{c|}{$\varepsilon=\hphantom{0}5\%$} \\
  & \multicolumn{2}{c|}{$\beta=90\%$} & \multicolumn{2}{c|}{$\beta=99\%$} & \multicolumn{2}{c|}{$\beta=90\%$} & \multicolumn{2}{c|}{$\beta=99\%$} \\
  \hline
  \centering{$N$} & \multicolumn{1}{c}{$\;k_{U}^{\textrm{above}}$} & \multicolumn{1}{c|}{$\;k_{U}^{\textrm{both}}$} & \multicolumn{1}{c}{$\;k_{U}^{\textrm{above}}$} & \multicolumn{1}{c|}{$\;k_{U}^{\textrm{both}}$} & \multicolumn{1}{c}{$\;k_{U}^{\textrm{above}}$} & \multicolumn{1}{c|}{$\;k_{U}^{\textrm{both}}$} & \multicolumn{1}{c}{$\;k_{U}^{\textrm{above}}$} & \multicolumn{1}{c|}{$\;k_{U}^{\textrm{both}}$} \\
  \hline
  100 & 25 & 21 & 9 & 9 & 6 & 6 & 1 & 1 \\
  200 & 85 & 70 & 41 & 40 & 28 & 23 & 9 & 9 \\
  500 & 331 & 285 & 214 & 196 & 155 & 124 & 70 & 67 \\
  1000 & 799 & 725 & 610 & 562 & 483 & 397 & 264 & 242 \\
  2000 & 1778 & 1680 & 1524 & 1436 & 1310 & 1135 & 857 & 779 \\
  3000 & 2770 & 2662 & 2485 & 2377 & 2224 & 1988 & 1599 & 1466  \\
  4000 & 3766 & 3652 & 3463 & 3342 & 3171 & 2894 & 2421 & 2242 \\
  5000 & 4764 & 4645 & 4450 & 4320 & 4137 & 3829 & 3291 & 3072 \\
  6000 & 5762 & 5641 & 5440 & 5304 & 5113 & 4781 & 4192 & 3940 \\
  7000 & 6761 & 6638 & 6434 & 6292 & 6093 & 5744 & 5112 & 4831 \\
  8000 & 7760 & 7636 & 7427 & 7283 & 7079 & 6715 & 6049 & 5742 \\
  9000 & 8759 & 8634 & 8424 & 8276 & 8067 & 7692 & 6997 & 6670 \\
  10000 & 9758 & 9632 & 9420 & 9269 & 9059 & 8673 & 7952 & 7608 \\ 
  \hline
\end{tabular}
\caption{The values $k_{U}^{\textrm{above}}$ and $k_{U}^{\textrm{both}}$ are the maximal integer $k$ that satisfies (\ref{eq:problowerbound}) and (\ref{eq:probupperandlowerbound}), respectively.}  
\label{table:kDelta}
\end{center}  
\end{table}

\begin{figure}
\begin{center}
\includegraphics[clip,trim={0 0.5cm 0 1.25cm},width=0.6\linewidth]{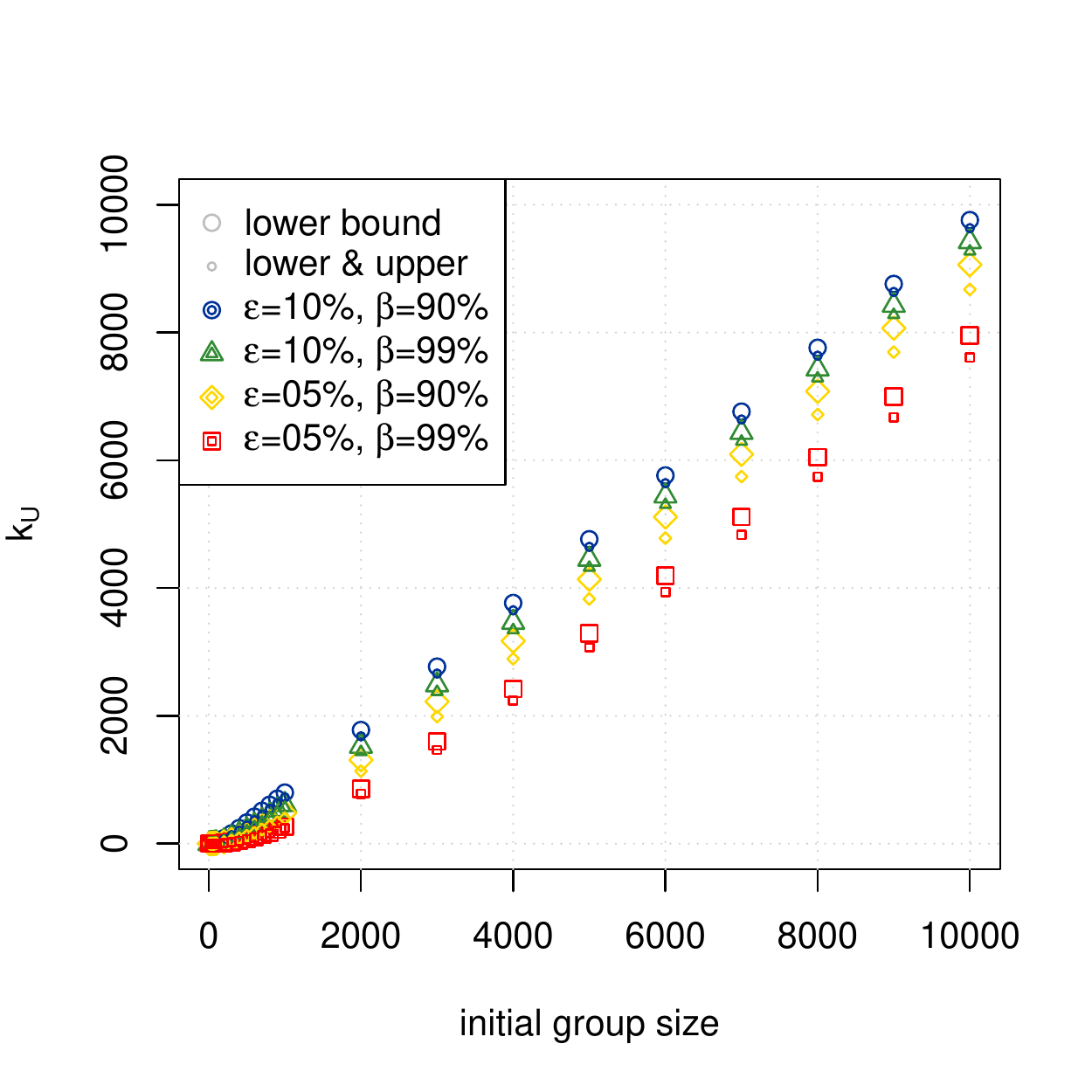}
\captionof{figure}{
The maximal number of members $k_{U}$ for whom the lifetime income is stable, plotted against the initial number of members $N$ in the fund, for a selection of threshold parameters $\varepsilon$ and certainties $\beta$.  The number is calculated from either (\ref{eq:problowerbound}), represented by the larger symbols in the plot, or (\ref{eq:probupperandlowerbound}), represented by the smaller symbols.
}
\label{fig:k-graph}
\end{center}
\end{figure}

%------------------------------------------------------------
\subsubsection{Calculation of the maximal number \pmb{$k_{C}$}}\label{subsection:calckc}
%------------------------------------------------------------

Calculating $k_{C}$ involves a straightforward simulation of $M=10$ million sample paths of the income process, based on monthly time units and applying an appropriate life table.  Along the $m$th sample path, the maximum value of $k(m)$ satisfying the event in inequality (\ref{eq:probupperandlowerthreshold}) or (\ref{eq:problowerthreshold}), as appropriate, is determined.  Similar to the calculation above, the value of $k_{C}$ is calculated as the $\beta$-quantile of the empirical distribution of a random variable $K$, for which $(k(m))_{m=1}^{M}$ are the observed samples.

%------------------------------------------------------------
\subsection{Discussion of the main results}\label{subsection:discuss-theorem}
%------------------------------------------------------------

The purpose of the pooled annuity fund is to pay a stable life-long income to its participants.  What universal features can be noted about its ability to do so, if volatility is due only to idiosyncratic longevity risk and $k_{U}$ is used as a close approximation to $k_{C}$?  In this section, stability is defined as the future income payments each being within $100 \varepsilon$\% of the initial income $C(0)$ (as in Corollary \ref{cor:mainII}).  This is equivalent to considering the values of $i$ for which the condition
\begin{equation} \label{eq:tontine-conditionII}
	\varepsilon+(1-\varepsilon)(i-1)/N \geq U_{(i)} \geq (1+\varepsilon)\min\{i,N-1\}/N - \varepsilon
\end{equation}
holds, in which the upper bound $\varepsilon+(1-\varepsilon)(i-1)/N$ corresponds to the lower income threshold $100(1-\varepsilon)C(0)$.  The lower bound $(1+\varepsilon)\min\{i,N-1\}/N - \varepsilon$ corresponds to the upper income threshold $100(1+\varepsilon)C(0)$.

First, note that the distance between the lower and upper bounds narrows as $i$ increases.  Geometrically, considering the bounds as lines, the gradient to the lower bound is higher than the gradient of the upper bound.  Thus as $i$ increases, it may be more likely that a sample of $U_{(i)}\sim \textrm{Beta}(i,N-i+1)$ \citep[pp.97]{ShorackWellner2009} lies outside of the bounds.  To investigate this possibility further, the bounds and the $0.5$\% and $99.5$\% quantiles of the uniform order statistics are plotted in Figure \ref{fig:eps5q005995UL} for $\varepsilon=0.05$.  Note that although the support of the distribution functions of the uniform order statistics is $[0,1]$, the quantiles indicate the location of the vast majority (99\%) of their possible values.

Figure \ref{fig:eps5q005995UL} illustrates that the income received at the end of the life of the longest-lived members is less likely to lie between the upper and lower income thresholds than for the shortest-lived members.  This is seen from Figure \ref{fig:eps5q005995UL}: the $0.5$\% and $99.5$\% quantile lines (the uppermost and lowermost black, dashed lines in the figure) lie inside the bounds (red, solid lines) for the lower order statistics (i.e. the smaller values of $i/N$, which denote the shorter-lived members).  For larger values of $i/N$ (denoting the longer-lived members), the quantile lines lie outside the bounds in the plot.   When this occurs depends on the number of people initially in the fund.  For $N=200$, the quantiles are seen to fall quickly outside the bounds, as $i/N$ increases (Figure \ref{fig:N200eps5q005995UL}).  Only about the first $6$ members ($i/200 \approx 0.03$) are extremely likely to get a life-long, stable income.  As $N$ increases to $2\,000$ (Figure \ref{fig:N2000eps5q005995UL}), it is only for approximately the first $900$ members (i.e. the first $800$ members; $i/2000 \approx 0.45$) that the $0.5$\% to $99.5$\% quantiles of $U_{(i)}$ falls outside the bounds.   Note that these values are not the same as the maximal number of members who get a stable, life-long income with a certainty level of $99$\%, since they look only at the quantiles at each time of death, rather than along each possible future scenario.

\begin{figure}[H]
\captionsetup[subfigure]{justification=centering} %
    \centering
    \begin{subfigure}[b]{0.5\textwidth}
        \centering
\includegraphics[trim=0 0.5cm 0 1.5cm,clip,width=\textwidth]{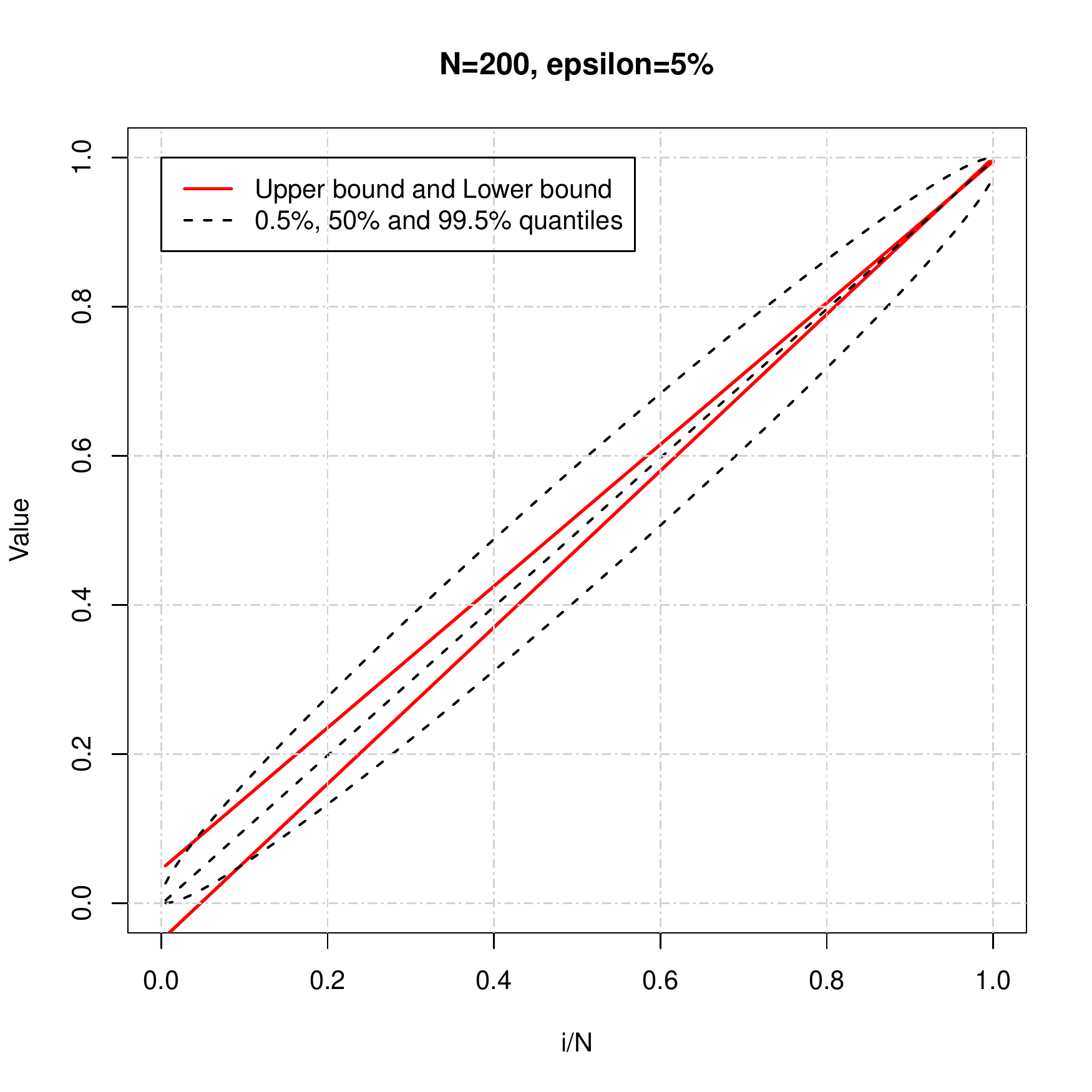}
\caption{$\varepsilon=0.05$ and $N=200$.}
\label{fig:N200eps5q005995UL}
    \end{subfigure}%
    \begin{subfigure}[b]{0.5\textwidth}
        \centering
\includegraphics[trim=0 0.5cm 0 1.5cm,clip,width=\textwidth]{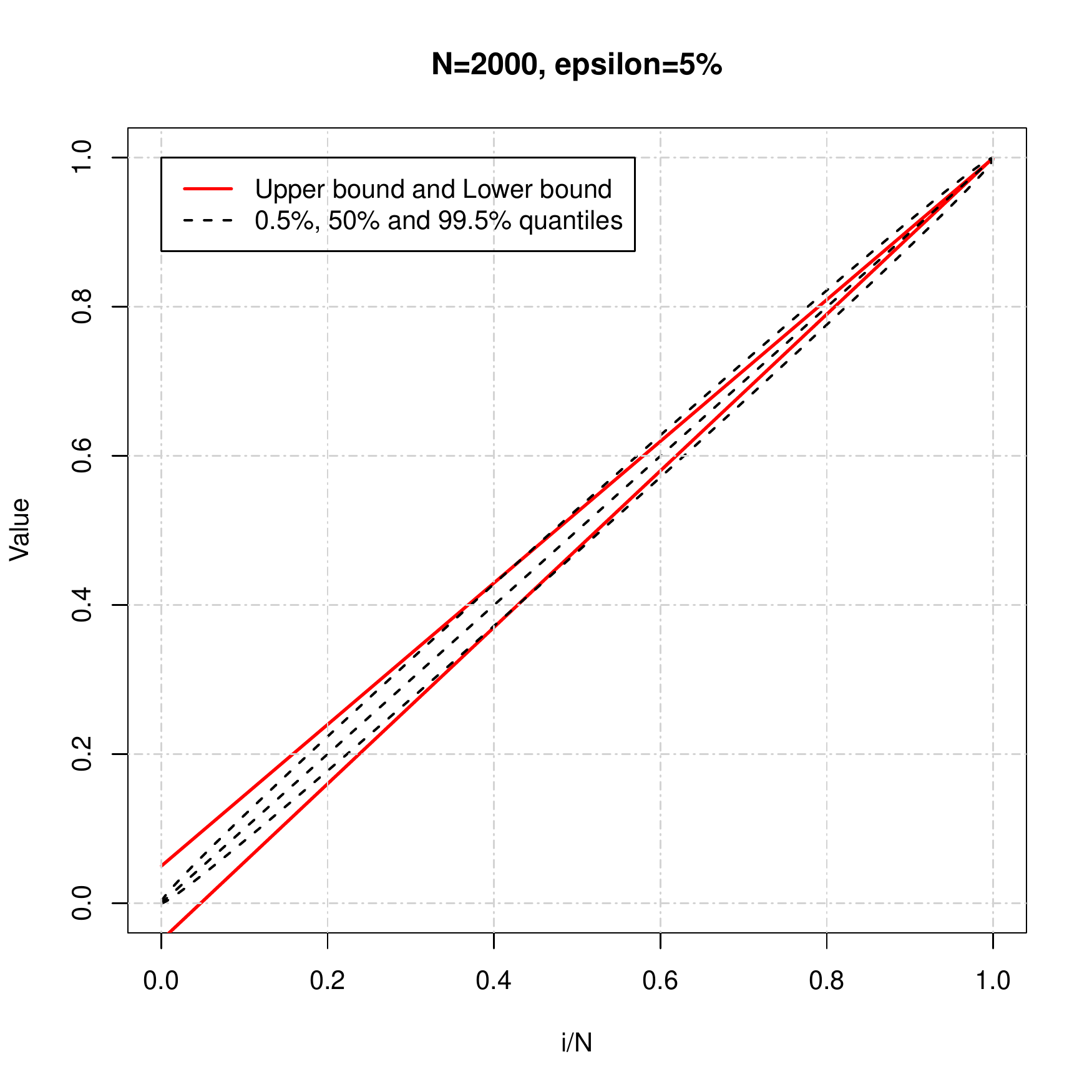}
\caption{$\varepsilon=0.05$ and $N=2\,000$.}
\label{fig:N2000eps5q005995UL}
    \end{subfigure}
    \caption{The symmetric upper and lower bounds on the order statistics, for $\varepsilon=0.05$ and either $N=200$ or $N=2\,000$, shown as solid red lines.  The $0.5$\%, $50$\% and $99.5$\% quantiles of the $i$th uniform order statistic are displayed as dashed black lines.  The $x$-axis values are $i/N$, rather than $i$, so that the domains of the plotted functions are all $[0,1]$.}
		\label{fig:eps5q005995UL}
\end{figure}

The second item to note on the condition (\ref{eq:tontine-conditionII}) is that it fails to hold for the highest values of $i$, by virtue of the lower and upper bounds crossing each other.  The point at which they cross is found by solving 
\[
\varepsilon+(1-\varepsilon)(i-1)/N = (1+\varepsilon) \min\{i,N-1\}/N - \varepsilon
 \]
for $i$.  Setting $i=N$ gives a contradiction ($N = N-1$) so assume that $i<N$ to obtain that the bounds have crossed for each $i$ satisfying
\begin{equation} \label{eq:symmetricboundresult}
i \geq N + \left(1-\varepsilon^{-1} \right)/2.
\end{equation}
Since the bounds are independent of the order statistics, in all future states of the world the following can be observed. 
\begin{itemize}
\item In view of $\varepsilon\leq1$, the right-hand side of the inequality (\ref{eq:symmetricboundresult}) is less than or equal to $N$.  Thus there is always at least one value of $i$ (i.e. $i=N$) for which the bounds have crossed.  Indeed, as $\varepsilon \rightarrow 1$, the right-hand side of the inequality (\ref{eq:symmetricboundresult}) tends to $N$, which means that the two bounds cross only as the last member of the fund dies.

\item As $\varepsilon \rightarrow 0$, the right-hand side of the inequality (\ref{eq:symmetricboundresult}) tends to $-\infty$ and the two bounds cross before any fund member dies.  In the limit, no fund member can receive a stable (i.e. constant) income in any future state of the world, meaning that the choice of $\beta$ is irrelevant.

\item For fixed $\varepsilon$, there are a fixed number of members who are still alive when the bounds have crossed each other.  For example, for $\varepsilon=0.05$ the longest-lived $10$ members do not receive a stable income for the entirety of their future lifetime in any future state of the world, regardless of the initial number of members in the fund (since the bounds have crossed for $i \geq N + (1-\varepsilon^{-1} )/2 = N-9.5$, i.e. $i \in \{N-9, N-8, \ldots, N\}$, a set of $10$ elements.).  The remaining $N-10$ members may or may not receive a life-long income in any particular state of the world.  

\end{itemize}

In summary, the longest-lived members are unlikely to receive a stable, life-long income when a symmetric income threshold is used to define income stability.  This is by dint of the narrowing between the symmetric bounds for the longest-lived members and the distribution of the uniform order statistics relating to those longest-lived members not concentrating at the same rate.   In fact, as long as $\varepsilon<1$, the longest-lived members have no opportunity to get a stable income for the entirety of their future lifetime, in any future state of the world.  Nonetheless, they may still receive a stable income for the majority of their future lifetime.

These observations support the increasing volatility in income for the longest-lived members (in the last cohort to enter if the fund is an open one), remarked upon in \citet{Piggottetal2005}, \citet{qiaosherris2013} and \citet{sabin2010}.  The results in this paper enable the precise elucidation in our model of when ``increasingly volatile'' becomes ``too volatile''.

Next turn to the case when income stability is defined as the income lying above a lower income threshold only.  This is equivalent to considering the values of $i$ for which the condition
\[
	\varepsilon+(1-\varepsilon)(i-1)/N \geq U_{(i)}
\]
holds.  From Figure \ref{fig:eps5q005995UL}, it appears that the 50\% quantile line lies below the upper bound (the upper solid red line in Figure \ref{fig:eps5q005995UL}; the lower solid red line can be ignored as it represents the redundant upper income threshold).  Considering the mean of the $i$th order statistic, $i/(N+1)$, instead of the median -- since it has a simple formula that is amenable to algebra and is not dissimilar in value \citep{Kerman2011} -- then
\[
\varepsilon+(1-\varepsilon)(i-1)/N \geq i/(N+1) \qquad \Rightarrow \qquad \begin{cases}
i \leq N+1 \quad \textrm{if $\varepsilon > 1/(N+1)$} \\
i \geq N+1 \quad \textrm{if $\varepsilon \leq 1/(N+1)$}.
\end{cases}
\]
Thus the means of the $N$ order statistics all lie above the upper bound $\varepsilon+(1-\varepsilon)(i-1)/N$, if the income threshold parameter $\varepsilon \leq 1/(N+1)$.  Approximating the medians of the $N$ order statistics by their means, this means that there is a less than 50\% chance of each order statistic lying below the upper bound.

In summary, the value of $\varepsilon$ should not be too small, i.e. it should be at least above $1/(N+1)$.  Otherwise, life-long income stability will not be achieved for a substantial proportion of the fund membership.  This comment applies in the presence of either a lower income threshold only or both income thresholds.

%------------------------------------------------------------
\section{Application: determining for how long the fund can provide a stable income} \label{subsection:confidence}
%------------------------------------------------------------

The time at which the income of a pooled annuity fund becomes unstable -- i.e. when the income no longer lies between the thresholds in at least $100\beta$\% of future states of the world -- is calculated in terms of the number of deaths that occur up to that time.  However, it is useful to communicate this as a length of time which represents how long all members receive a stable income with a specified level of certainty.  The members who die before that length of time receive a life-long stable income, whereas those who die after it receive a stable income for only part of their lifetime.

Since the time at which a specified number of deaths occurs varies in each future state of the world, an estimate of the time at which it occurs is calculated, which is called the likely time.  This is an approximation of the true time, since the number of deaths used, $k_{U}$, is calculated by maximising the left-hand side of inequality (\ref{eq:problowerbound}) over $k$, rather than maximising that of inequality (\ref{eq:problowerthreshold}).

The calculation of the length of time requires the selection of a mortality law, and the mortality law matters.  For example, the length of time over which the fund is likely to provide a stable income will increase as mortality lightens (i.e. people live longer).  

For example, assume that $N=2\,000$ members all join the pooled annuity fund at age 70.  The annual income $C(m)$ which they withdraw at time $m \in \{0,1,2,\ldots\}$ is calculated by dividing their fund value at time $m$ by the value of a single life annuity calculated at age $70+m$, as described in Section \ref{SUBSECincomeprocess}.  Suppose that the future income is considered stable if it is at least $95\%$ of the initial income (i.e. lower threshold parameter of $\varepsilon=0.05$ and no upper threshold parameter), in $90$\% of all future scenarios (i.e. certainty level $\beta=0.9$).  To determine for how long the fund is likely to provide a stable income, a two-step calculation is required.  
\begin{itemize}
\item First, by Monte Carlo simulation (as described in Section \ref{subsection:calcku}) it is calculated that the highest value of $k$ which satisfies inequality (\ref{eq:problowerbound}) is $k_{U}=1\,310$.  

Then, by inequality (\ref{eq:problowerthreshold}), all members of the fund get an income of at least $0.95C(0)$ up to time $\lfloor T_{(1310)} \rfloor$, in at least $90$\% of future scenarios.  As shown in Section \ref{subsection:efficacy}, the maximal value $k:=k_{C}$ that satisfies inequality (\ref{eq:problowerthreshold}) is about $2$\% higher than the maximal value $k:=k_{U}=1\,310$ that satisfies inequality (\ref{eq:problowerbound}) for $N=2\,000$, for the considered mortality table.  So in reality, $k_{C}\approx 1\,338$ and thus the stable income is provided for longer than suggested by $k_{U}$.  Mitigating the small error caused by this under-estimation is that, in the second step described next, the higher time $T_{(1310)}$ at which $1\,310$ deaths have occurred is used, rather than the time $\lfloor T_{(1310)} \rfloor \leq T_{(1310)}$.

\item Second, calculate at what time is it likely that exactly $k_{U}=1\,310$ deaths have been observed in the initial population of $2\,000$ 70-year-olds.  This is the problem of finding $t$ that satisfies ${}_{t}q_{70} = 1\,310/2\,000$, in which the constant ${}_{t}q_{70}$ is the true probability that a $70$-year old dies between age $70$ and age $70+t$.  Here, $t$ is called the likely time.  It is a straightforward task if one has a life table at hand.  For example, assuming that the mortality of each of the $2\,000$ members initially in the fund follows the UK-based life table S1PFL \citep{IFoA2008} and assuming a uniform distribution of deaths between integer ages, gives that $t=19.4$ years.  
\end{itemize}

Note that only small variations of the time of death of the $k_{U}$th member are expected.   This is because as $N$ becomes larger, the probability density of the $k$th uniform order statistic tends to a Dirac delta function.  To examine how quickly this happens, the probability that the time of the $k_{U}$th death occurs at least $d\geq 0$ time units before its likely time of occurrence $t$, i.e. $\mathbb{P}[T_{(k_{U})} \leq t-d]$ is plotted against $d$ in Figure \ref{fig:beta-Time-lower-upper}, for different choices of $N$, $\varepsilon$ and $\beta$ and for one mortality law.  Symmetric lower and upper income thresholds are imposed.  The figure when only a lower income threshold is imposed is very similar and is omitted.

For an initial number of members $N=100$ in the fund (the thickest lines in Figure \ref{fig:beta-Time-lower-upper}), the time of the $k_{U}$th death has a $99$\% probability of appearing between $2.5$ to $3$ years below $t$.  For groups with initially $N=1\,000$ members, there is a $99\%$ probability that the $k_{U}$th death appears at most one year below $t$.  This falls to $0.5$ years with $N=10\,000$ members.  The conclusion is that the likely time $t$ calculated in the second step above is very close to the actual time of the $k_{U}$th death when there are at least $1\,000$ members initially in the fund, being at most one year greater than the actual time.

Figure \ref{fig:Time-IFoA} shows the likely time $t$ that the income is stable, for the same mortality law.   The time increases with $N$, since more members mean that idiosyncratic longevity risk is more diversified and thus the income paid from each surviving member's account is stable for longer.  Although the rate of increase slows as $N$ increases, the income is stable for quite a long time once the number of initial members $N$ is at least $2\,000$.  For example, when symmetric lower and upper income thresholds are imposed, the likely time for which a stable income can be provided with certainty $\beta$ increases from $17$ years to $25$ years as $N$ increases from $2\,000$ to $10\,000$, for $\varepsilon=0.05$ and $\beta=0.9$.

Now turn to the case when there is only a lower threshold imposed in the definition of income stability.  Starting with $N=2\,000$ members, the income is at least $95$\% of the initial income (i.e. $\varepsilon=0.05$) for nearly $15$ years in $\beta=99$\% of future states of the world.  This lower income stability extends to almost $20$ years if the certainty is dropped to $\beta=90$\%.  From Figure \ref{fig:Time-IFoA} it is seen that including both an upper and a lower threshold reduces by, at most, $2$ years the length of time for which a fund is likely to provide an income above a lower threshold.

It is also seen from the figure that the threshold parameter $\varepsilon$ has a larger effect on the likely times than the certainty $\beta$, for each value of $N$.  For example, for a fund with initially $N=2\,000$ members, decreasing the threshold parameter $\varepsilon$ from $10$\% to $5$\% reduces the likely time by up to $7$ years, everything else being the same.  However, decreasing the certainty level $\beta$ from $99$\% to $90$\% reduces the likely time by up to $5$ years, everything else being the same.  These reductions decline in magnitude as $N$ increases.

\begin{figure}[H]
    \centering
\includegraphics[trim=0 0.5cm 0 1.5cm,clip,width=0.5\textwidth]{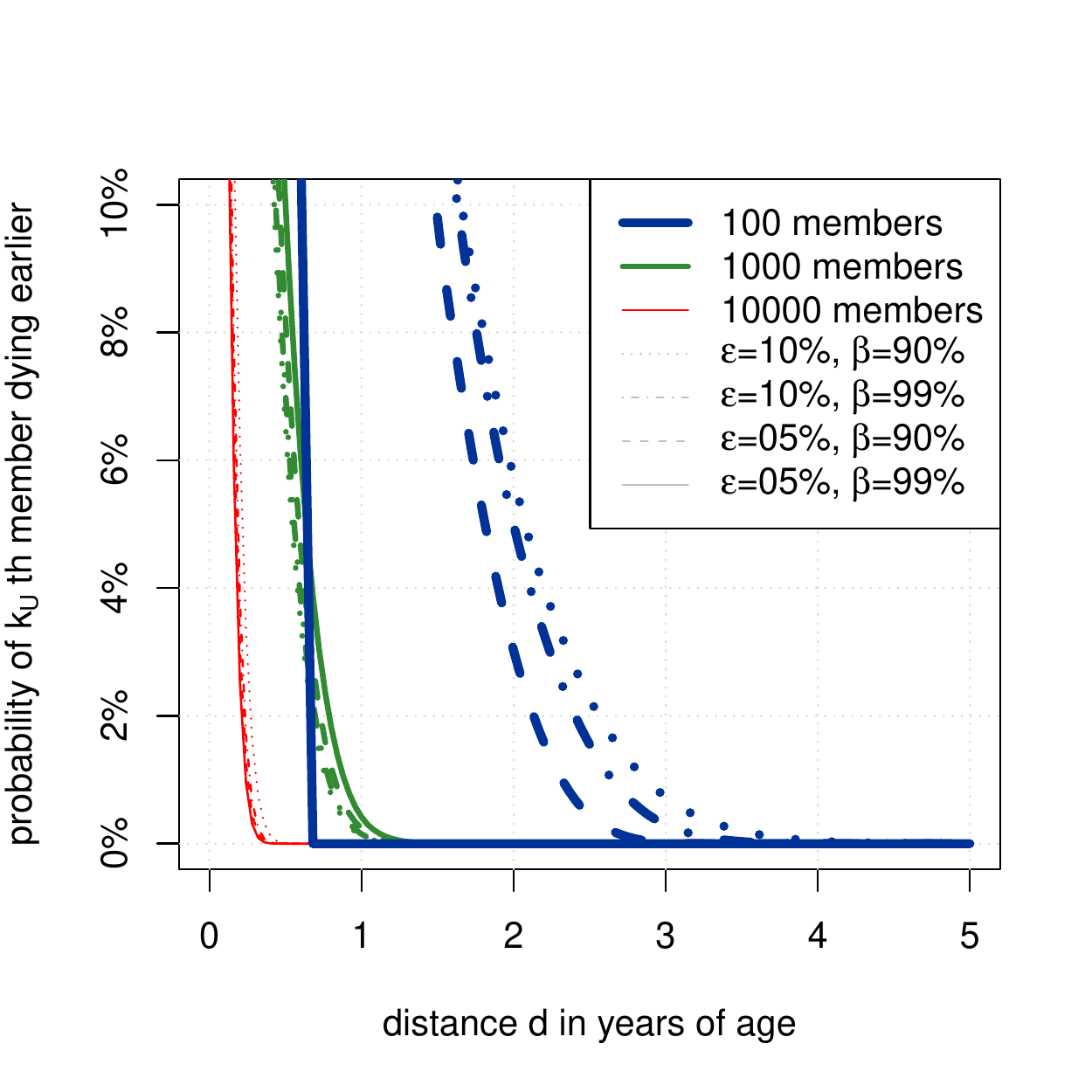}
    \caption{$\mathbb{P}[T_{(k_{U})} \leq F^{-1}(k_{U}/N)-d]$ plotted against $d$, in which $F(s)={}_{s}q_{70}$ is calculated using the UK-based life table S1PFL \citep{IFoA2008}.  The plots are shown for a selection of values of the initial number $N$ of 70-year-old members in the fund, certainties $\beta$ and threshold parameter $\varepsilon$ and when a symmetric lower and upper income threshold are imposed.  The plot using the Human Mortality Database's life table for the UK for 2016 \citep{HMD2016} looks very similar and is omitted.}
		\label{fig:beta-Time-lower-upper}
\end{figure}

\begin{figure}[H]
    \centering
\includegraphics[trim=0 0.5cm 0 1.5cm,clip,width=0.6\textwidth]{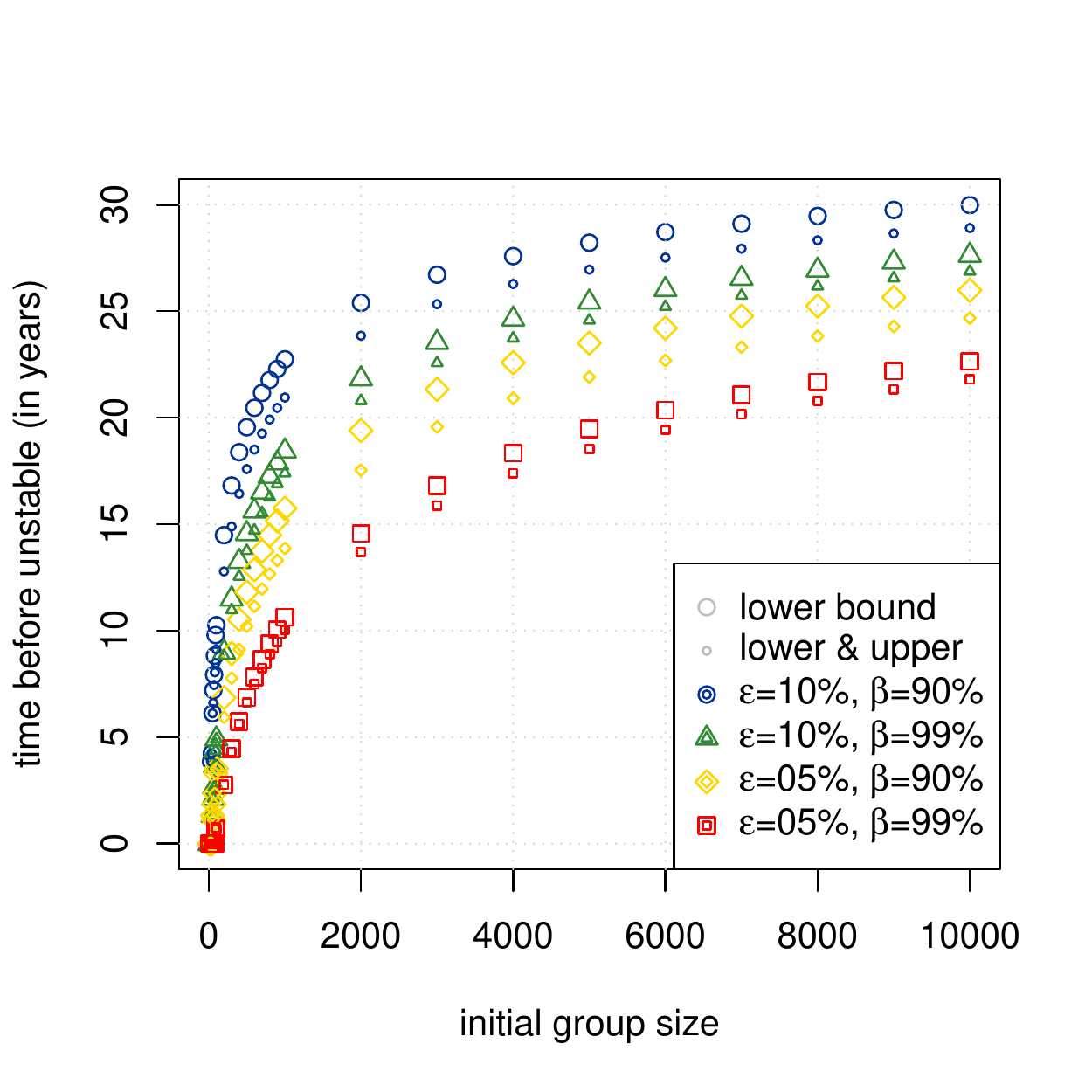}
    \caption{The likely length of time for which the fund can provide a stable income for a group of $70$-year-olds, plotted as a function of the initial number of members in the fund $N$.  The plot is based on the UK-based life table S1PFL \citep{IFoA2008}.  The smaller symbols indicate a symmetric lower and upper threshold on the income, and the larger symbols indicate a lower threshold only on the income.}
		\label{fig:Time-IFoA}
\end{figure}

Our results also provide evidence to support broadly the statement in \citet[page 967]{qiaosherris2013} that ``Most of the significant pooling benefits are realized when the pool size reaches $1\,000$''; Figure \ref{fig:Time-IFoA} indicates a slowing up of the rate of increase of the likely time for which a stable income can be provided around $N=2\,000$ (the study of \citealt{qiaosherris2013} considered only $N\in\{1,100,1\,000, 10\,000\}$).  It is important to note that when the number of members who receive a stable income is considered, there is still a steadily increasing proportion of members who receive a life-long, stable income, as can be observed from Figure \ref{fig:k-graph}.

%------------------------------------------------------------
\section{Approximate formula}\label{section:Donsker-approach}
%------------------------------------------------------------
Here an approximate formula to determine the maximum integer $k:=k_{U}$ that satisfies inequality (\ref{eq:problowerbound}) is presented.  Fix the initial number of fund members $N$, certainty $\beta \in (0,1)$, threshold parameter $\varepsilon \in (0,1)$ and let $\Phi^{-1}$ be the inverse of the standard normal distribution function $\Phi$.  Then
\begin{align}\label{eq:approxII}
 k_{U} \approx  k_{U}^{\textrm{approx}} := N \left( 1 - \left\lfloor \frac{1}{1-\varepsilon} \left(1-\frac{1}{1+\tfrac{1}{N} \left( \tfrac{1-\varepsilon}{\varepsilon} \right)^2\left( \Phi^{-1} \left(\tfrac{1-\beta}{2}\right) \right)^2} \right) \right\rfloor_{N} \right),
\end{align}
in which $\lfloor u \rfloor_N=\max\{i/N : i/N \leq u \textrm{ and } i \in \{0,1,\ldots,N\}\}$ for $u \geq 0$.  The derivation is shown in Appendix \ref{APPproofs}.

Figure \ref{fig:error-absolute} shows that $1-k_{U}^{\textrm{approx}}/N$ is visually almost indistinguishable from $1-k_{U}/N$ on a scale from $0\%$ to $100\%$.  However, the relative errors reveal small systematic discrepancies.  Observe in Figure \ref{fig:error-relative} that the relative errors change most noticeably with the threshold parameter $\varepsilon$, upon fixing the certainty $\beta$.  On the other hand, the relative errors are similar for different certainties $\beta$, upon fixing the threshold parameter $\varepsilon$.  This indicates that the treatment of $\varepsilon$ is the major source of error.   In fact, in the derivation of the approximation (\ref{eq:approxII}), a vague argument that involves $\varepsilon$ and works by compensating one effect with another is employed; see the expressions (\ref{eq:$1-beta=P[<supu<Fu-hatF_nF^-1]$}) and (\ref{eq:$1-beta=P[<supBu]$}).

\begin{figure}[H]
\captionsetup[subfigure]{justification=centering} %
    \centering
    \begin{subfigure}[b]{0.5\textwidth}
        \centering
\includegraphics[trim=0 0.5cm 0 1.5cm,clip,width=\textwidth]{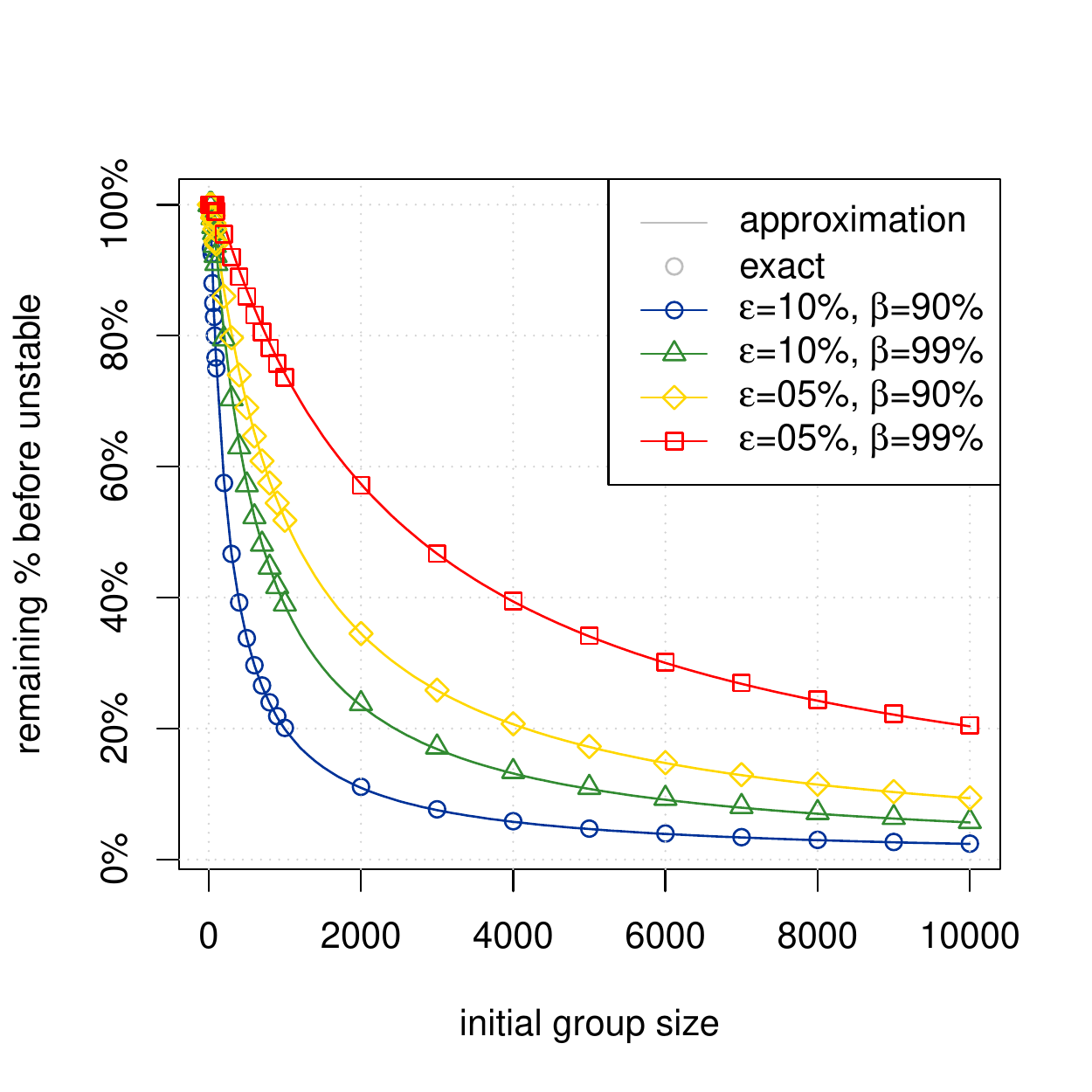}
\caption{$100(1-k_{U}/N)$ versus $100(1-k_{U}^{\textrm{approx}}/N)$.}
\label{fig:error-absolute}
    \end{subfigure}%
    \begin{subfigure}[b]{0.5\textwidth}
        \centering
\includegraphics[trim=0 0.5cm 0 1.5cm,clip,width=\textwidth]{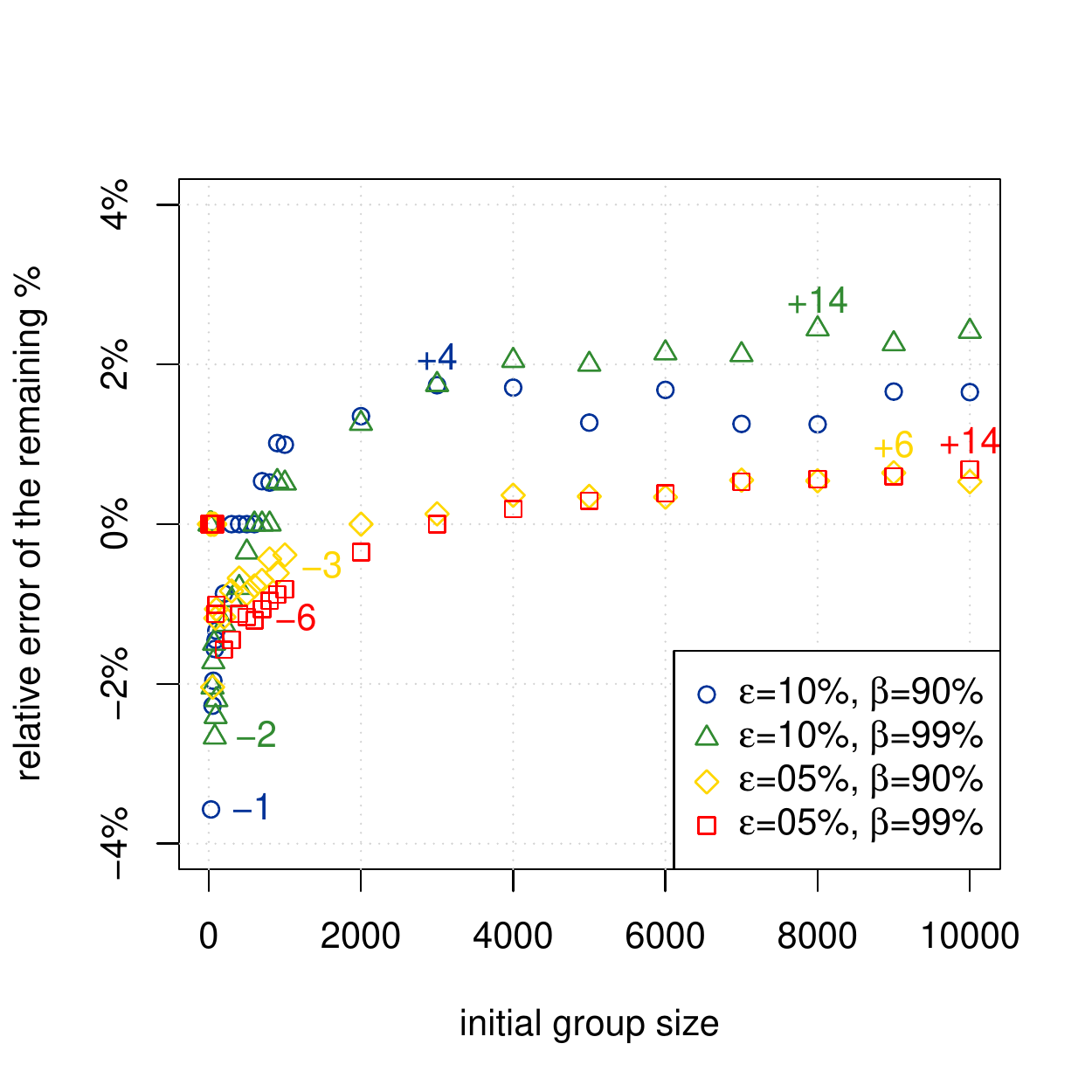}
\caption{$(k_{U}^{\textrm{approx}}-k_{U})/(N-k_{U})$.}
\label{fig:error-relative}
    \end{subfigure}
    \caption{Representations of the error in the approximation $k_{U}^{\textrm{approx}}$.  Figure \ref{fig:error-absolute} plots the percentage of members still alive after the $k_{U}$th death, i.e. $100(1-k_{U}/N)$ and displayed as discrete symbols, and its approximation $100(1-k_{U}^{\textrm{approx}}/N)$, plotted as solid lines, as the initial number of members $N$ increases.  The results are shown for various choices of the certainty $\beta$ and threshold parameter $\varepsilon$.  Figure \ref{fig:error-relative} plots the relative error of these values, which reduces to calculating $(k_{U}^{\textrm{approx}}-k_{U})/(N-k_{U})$.  The displayed integers inside the graph show the maximal value of $k_{U}^{\textrm{approx}}-k_{U}$ in terms of number of members.}
		\label{fig:errorinapproximation}
\end{figure}

More generally, it is possible to apply the same methodology to find an approximate formula to determine the maximum integer $k:=k_U$ that satisfies
\[
\mathbb{P}\left[(1-\varepsilon_1)\tfrac{i-1}{N}+\varepsilon_1\geq U_{(i)}\geq(1+\varepsilon_2)\tfrac{\min\{i,N-1\}}{N}-\varepsilon_2\,\mbox{ for
all $i\in\{1,2,\ldots,k\}$}\right]\geq\beta.
\]
However, the result involves a special function that is as inexplicit as the target function, and we give it only for completeness:
\[
k_U \approx k_U^{\textrm{approx}}:=N \, \lfloor \Psi^{-1}_{\varepsilon_1,\varepsilon_2,N}(\beta)\rfloor_N
\]
with
\[
\Psi_{\varepsilon_1,\varepsilon_2,N}(y) :=\mathbb{P}\left[-\sqrt{N}\tfrac{\varepsilon_2}{1+\varepsilon_2}\leq\inf_{s\leq\frac{1}{(1+\varepsilon_2)
(1-y)}-1}W(s);\sup_{s\leq\frac{1}{(1-\varepsilon_1)(1-y)}-1}W(s)\leq\sqrt{N}\tfrac{\varepsilon_1}{1-\varepsilon_1}\right]
\]
for all $y \in (0,1)$ and $W$ a standard Brownian motion.

%------------------------------------------------------------
\section{Conclusion}
%------------------------------------------------------------
Here a pooled annuity fund has been analyzed to see how well it provides a stable income to its participants, with a specified degree of certainty.  Income stability is defined path-wise.  The main theoretical result is the derivation of a lower bound that can be used as a good approximation to the true number of members.   The lower bound gives both insights and a greater understanding of the pooled annuity fund's ability to diversify idiosyncratic longevity risk:
\begin{itemize}
\item It is independent of the mortality distribution of the fund members.  As the lower bound is close to the true value, it suggests that the true number of members is unlikely to change much if the mortality distribution is varied, all other parameters being unchanged.
\item It can be calculated faster than the true number, since the former requires only the simulation of the order statistics of independent, standard uniform random variables rather than a time-dependent stochastic simulation.
\item When a stable income is defined as the income lying between a symmetric upper and non-zero lower income threshold, the longest lived members do not receive a stable income in any future state of the world.  Even if such members may receive a stable income for the majority of their lifetime, this means that there should be an alternative method of providing a stable income with a desired degree of certainty for the entirety of their lifetime, for example by using deferred life annuity contracts or the opportunity to exit the fund at a certain age or the modifications proposed in, for example, \citet{chenetal2019}, \citet{chenrach2019} and \citet{DonnellyYoung2017}.  This is an insight into the failure of the diversification of idiosyncratic longevity risk for the longest lived members.
\end{itemize}

The main theorem quantifies the number of members required to give the desired degree of income stability.  In other words, the number of members needed to diversify idiosyncratic longevity risk to a specific degree.   At a high-level, the results suggest that the membership of the fund should number in the thousands if it is important to reduce the income instability derived from longevity risk pooling to low levels.  Our finding complements the results of \citet{qiaosherris2013}, who suggest the same order of magnitude based on examining the quantiles of the income.  However, our results allow the calculation of the precise number of members required and, moreover, use a path-wise definition of income stability.  This means that we know how the income streams have behaved up to the time that the calculated number of people die, rather than only how they behave at one point in time. 

Applying the theoretical results, the length of time for which the pooled annuity fund could provide a stable income with a specified degree of certainty was determined.  Again, the length of time depends on how income stability is specified.  It also depends on a mortality law; the results for one law are presented.  Based on the chosen mortality law, the fund can provide a stable income from $10$ years to $30$ years depending on the specification of income stability and the initial number of members in the fund.  Finally, an approximation to the lower bound is derived, and is found to be a close approximation to it.

The limitations of the assumptions made in this paper motivate future research into this area.  It is assumed that all annuitants join with the same amount of money and are the same age.  While it would be possible to impose such restrictions in real life, it is overly prescriptive and makes the continued success of the pooled annuity fund less likely.  All annuitants are assumed to have the same chance of dying at each age.  How well this approximates reality will depend on the group of annuitants; it would be unrealistic if anyone was allowed to join but may be appropriate for a fund offered only to university professors, for example.  Deaths are assumed to occur independently, but this may not be true if annuitants catch a deadly virus from each other.  These restrictions on the characteristics of the fund membership could be relaxed in future research.

The historical record of actuaries under-estimating future lifetimes and the fast spread across the world of the coronavirus both suggest that the distribution of the annuitants' future lifetimes assumed in the model is unlikely to be observed in practice.  How the inclusion of systematic longevity risk affects the results very broadly is discussed briefly here.  Suppose that the day after the fund started, the future lifetime distribution of the annuitants was found to understate for how long the annuitants would live.  Assume for simplicity that the definition of stability includes only a lower income threshold.

While the income paid to the annuitants would be consequently revised downwards, the number $k_{U}$ who would receive the stable version of this new income for life would be unchanged (assuming no further updates to the future lifetime distribution are needed).  This is because $k_{U}$ is independent of the choice of the future lifetime distribution, but it requires that this distribution is the one observed in practice.  This would be the case if the distribution changed the day after the fund started, and never changed subsequently.  

However, suppose that the understatement of future lifetime was not discovered while the fund's annuitants were alive.  Then the income paid to the annuitants would be too high and would gradually decline, as fewer people died than expected meaning that too much money was paid out.  Thus fewer than $k_{U}$ people would receive a stable, lifelong income for some given level of certainty since the declining income would breach the lower income threshold earlier than anticipated.

If instead more people died than suggested by the future lifetime distribution used to calculate the annuity payments, then less income is paid out than expected.  In that case, more than $k_{U}$ people would receive a stable, lifelong income, because more people die earlier - thus getting a lifelong income over a shorter lifetime - and there is more money for the longer-lived.

As the focus of this paper is idiosyncratic longevity risk, the quantification of the effect of systematic longevity risk on the results requires a further study.  This is important so that how the annuitants' income may vary over their lifetime is understood better, thus allowing the pooled annuity fund to meet the needs and expectations of annuitants.

%------------------------------------------------------------
\section*{Acknowledgments}
%------------------------------------------------------------
Research for this paper was undertaken as part of the Institute and Faculty of Actuaries' ARC research programme ``Minimising Longevity and Investment Risk while Optimising Future Pension Plans'', for which funding is gratefully acknowledged by the authors.  The authors thank three anonymous reviewers and the Editor, Prof. Dr W\"{u}thrich, for criticisms and suggestions which have undoubtedly improved the paper immensely.

%------------------------------------------------------------
\newpage

%------------------------------------------------------------

%------------------------------------------------------------
\newpage
\appendix
\section*{Appendix}
%------------------------------------------------------------

%------------------------------------------------------------
\section{Derivation} \label{APPproofs}
%------------------------------------------------------------

\begin{claim}
For the order statistics $(U_{(i)})_{i=1}^N$ of $N$ independent and standard uniformly distributed random variables, certainty $\beta\in[0,1]$, and threshold parameter $\varepsilon\in(0,1)$, let $k_U\leq N$ be the last integer that fulfills
\[
\mathbb{P} \left[(1-\varepsilon)\tfrac{i-1}{N} + \varepsilon \geq U_{(i)}\;\;\mbox{for all $i\in\{1,2,\ldots, k\}$}\right] \geq \beta
\]
For $u\geq0$, let $\lfloor u \rfloor_N=\max\{i/N: i/N \leq u \, \textrm{ and } \, i \in \{0,1,\ldots,N]\}\}$, and let $\Phi^{-1}$ be the inverse of the standard normal distribution function $\Phi$.  Then $k_U$ can be calculated approximately from
\[
    1-\frac{k_U}{N}\approx \left\lfloor \frac{1}{1-\varepsilon} \left(1-\frac{1}{1+\tfrac{1}{N} \left( \tfrac{1-\varepsilon}{\varepsilon} \right)^2\left( \Phi^{-1} \left(\tfrac{1-\beta}{2}\right) \right)^2} \right) \right\rfloor_N.
\]
\end{claim}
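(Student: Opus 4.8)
The plan is to rewrite the target probability (the one appearing in Corollary~\ref{cor:main}) in terms of the uniform empirical distribution function, recognise the resulting event as a boundary crossing of the empirical process, pass to a limiting Brownian motion via Donsker's theorem together with a time change that flattens the boundary, and finally evaluate the crossing probability by the reflection principle and invert for $y=k_U/N$.

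First I would rewrite the event. Writing $G_N$ for the empirical distribution function of $U_1,\dots,U_N$, so that $G_N(U_{(i)})=i/N$, the family of inequalities $U_{(i)}\le(1-\varepsilon)\tfrac{i-1}{N}+\varepsilon$ for $i\le k$ is equivalent to the functional inequality $G_N(u)\ge\tfrac{u-\varepsilon}{1-\varepsilon}$ for $u\in[0,U_{(k)})$. In terms of the uniform empirical process $\alpha_N(u)=\sqrt N\,(G_N(u)-u)$ this reads
\[
\alpha_N(u)\ \ge\ -\sqrt N\,(1-u)\,\tfrac{\varepsilon}{1-\varepsilon}\qquad\text{for all }u\in[0,U_{(k)}).
\]
By Donsker's theorem $\alpha_N$ is approximated by a Brownian bridge $B$, and the key simplification is the representation $B(u)=(1-u)\,W\!\big(\tfrac{u}{1-u}\big)$ for a standard Brownian motion $W$: the factor $(1-u)$ cancels the $(1-u)$ in the boundary, so after the time change $s=u/(1-u)$ the event becomes the \emph{constant}-boundary event $W(s)\ge -\sqrt N\,\tfrac{\varepsilon}{1-\varepsilon}$ for all $s\le T$. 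The horizon $T$ comes from the largest relevant value of $u$; using the boundary-adjusted endpoint $u_{\max}=(1-\varepsilon)y+\varepsilon$ with $y=k/N$ gives $T=\tfrac{u_{\max}}{1-u_{\max}}=\tfrac{1}{(1-\varepsilon)(1-y)}-1$.

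Next I would evaluate the probability by the reflection principle: for $a>0$,
\[
\mathbb{P}\Big[\inf_{s\le T}W(s)\ge -a\Big]=\mathbb{P}\Big[\sup_{s\le T}W(s)\le a\Big]=2\Phi\big(a/\sqrt T\big)-1 .
\]
Setting this equal to $\beta$ with $a=\sqrt N\,\tfrac{\varepsilon}{1-\varepsilon}$ gives $a/\sqrt T=-\Phi^{-1}\!\big(\tfrac{1-\beta}{2}\big)$, hence $T=\tfrac{N\varepsilon^2}{(1-\varepsilon)^2}\big/\big(\Phi^{-1}(\tfrac{1-\beta}{2})\big)^2$. Equating the two expressions for $T$ and solving for $y$ yields
\[
1-\frac{k_U}{N}=1-y=\frac{1}{1-\varepsilon}\left(1-\frac{1}{1+\tfrac1N\big(\tfrac{1-\varepsilon}{\varepsilon}\big)^2\big(\Phi^{-1}(\tfrac{1-\beta}{2})\big)^2}\right),
\]
and since $k_U/N$ must lie on the grid $\{0,1/N,\dots,1\}$ I would apply $\lfloor\cdot\rfloor_N$, recovering the claimed formula.

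The main obstacle is that this chain of reductions is only heuristic, as the paper itself concedes: the boundary is $N$-dependent and diverges as $N\to\infty$, and the upper endpoint $U_{(k)}$ is random, so replacing it by a deterministic horizon $T$ and passing to the fixed-boundary Brownian event involves approximations whose errors are not obviously negligible. In particular, the choice $u_{\max}=(1-\varepsilon)y+\varepsilon$ rather than the naive $u_{\max}=y$ (which would instead give $T=y/(1-y)$) is exactly the compensating step that makes the $\varepsilon$-dependence come out correctly, and controlling it is the delicate part — consistent with the numerical finding that the treatment of $\varepsilon$ dominates the error. Turning the sketch into a genuine bound would require a quantitative invariance principle (a strong coupling of the empirical process to Brownian motion) with explicit control of the boundary-crossing error, which goes beyond the approximate formula sought here.
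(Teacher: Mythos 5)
Your proposal is correct and follows essentially the same route as the paper's own derivation: rewriting the event as an empirical-process boundary crossing, applying Donsker's theorem with the Brownian-bridge representation $B(u)=(1-u)W\bigl(u/(1-u)\bigr)$ to flatten the boundary, using the same compensating endpoint enlargement $u_{\max}=\varepsilon+(1-\varepsilon)y$, and finishing with the reflection principle and the grid floor. You even flag the identical weak point the paper concedes, namely that the $\varepsilon$-dependent endpoint adjustment is a heuristic compensation rather than a controlled error bound.
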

\begin{proof}[Derivation of claim]
Assume that both the time of deaths of members and payment times to surviving members are reasonably dense in time so that from (\ref{EQNfirstthmiii}) in the proof of Theorem \ref{theorem:main},
\[
\mathbb{P}\left[ \inf_{s \in [0,T_{(k_U)}) } \frac{{}_{s}p_{x}}{{}_{s}\hat{p}_{x}} \geq 1-\varepsilon \right]\geq\beta 
\]
is approximately equivalent to
\[
\mathbb{P}\left[ \inf_{s \in [0,t] } \frac{{}_{s}p_{x}}{{}_{s}\hat{p}_{x}} \geq 1-\varepsilon \right]\geq\beta,\quad\mbox{with $t\approx T_{(k)}$}.
\]
Using $\hat{F}_{N}(s)= 1 - {}_{s}\hat{p}_{x}$ and $F(s) = 1 - {}_{s}p_{x}$ from the proof of Theorem \ref{theorem:main}, 
\[
\begin{split}
\beta = \mathbb{P}\left[\frac{\px[s]{x}}{\actsymb[s]{\hat{p}}{x}}\geq 1-\varepsilon,\;\;\forall\,s\leq t\right] = & \mathbb{P}\left[\frac{1-F(s)}{1-\hat{F}_{N} (s)}\geq 1-\varepsilon,\;\;\forall\,s\leq t\right] \\
= & \mathbb{P}\left[\frac{\varepsilon}{1-\varepsilon}\geq\frac{F(s)-\hat{F}_{N} (s)}{1-F(s)},\;\;\forall\,s\leq t\right].
\end{split}
\]
It is more convenient to look at the complementary probability, i.e.
\[
1-\beta = \mathbb{P}\left[\exists\,s\leq t:\;\frac{\varepsilon}{1-\varepsilon}<\frac{F(s)-\hat{F}_{N} (s)}{1-F(s)}\right] = \mathbb{P}\left[\frac{\varepsilon}{1-\varepsilon}<\sup_{s\leq t}\frac{F(s)-\hat{F}_{N} (s)}{1-F(s)}\right].
\]
Let $F^{-1}$ be the generalized inverse of $F$.  Re-writing the above expression in a form amenable to applying well-known results from the literature yields
\begin{equation}\label{eq:$1-beta=P[<supu<Fu-hatF_nF^-1]$}
    1-\beta=\mathbb{P}\left[\frac{\varepsilon}{1-\varepsilon}<\sup_{u\leq F(t)}\frac{u-\hat{F}_{N}(F^{-1}(u))}{1-u}\right].
\end{equation}
One of the early versions of Donsker's theorem states that the process $\,u\mapsto\sqrt{N}(u-\hat{F}_N(F^{-1}(u)))\,$ converges in distribution for $N\rightarrow\infty$ to a Brownian bridge \citep{donsker1952}. However, applying this theorem to (\ref{eq:$1-beta=P[<supu<Fu-hatF_nF^-1]$}) yields an approximation that noticeably underestimates the true value. The issue seems to be that the discontinuous jump-processes $(\hat{F}_N)_{N=1}^\infty$ are approximated by a continuous Brownian bridge.  Heuristically, a continuous process should be a better fit for a continuous process than a discontinuous one. In fact, the classical Donsker theorem is based on continuous interpolations of random variables \citep{Billingsley1999}.

Let $\hat{F}^c_n$ be the process that continuously interpolates the jumps of $\hat{F}_n$ for given $n$.  Then $\hat{F}^c_n\geq\hat{F}_n$ are functions on the real line. This leads to a smaller supremum and, in turn, to an under-estimation of (\ref{eq:$1-beta=P[<supu<Fu-hatF_nF^-1]$}) when $\hat{F}_n$ is replaced with its continuous version $\hat{F}^c_n$ in (\ref{eq:$1-beta=P[<supu<Fu-hatF_nF^-1]$}).

One way to compensate for this under-estimation when using the Brownian bridge is to enlarge the set $\{u:u\leq F(t)\}$. A similar argument has been used in \citet{Vrbik2018} for small sample corrections in the context of the Kolmogorov-Smirnov test. Interestingly, there is an enlargement that leaves the value of (\ref{eq:$1-beta=P[<supu<Fu-hatF_nF^-1]$}) almost unchanged: let $y$ be the value of the last jump of $\hat{F}_{N}(F^{-1}(u))$ in $\{u:u\leq F(t)\}$. Assume that $\varepsilon/(1-\varepsilon)$ hasn't been crossed so far (the other case doesn't change the outcome of the inequality in (\ref{eq:$1-beta=P[<supu<Fu-hatF_nF^-1]$}) for any enlargement).  Then
\begin{align*}
    \frac{\varepsilon}{1-\varepsilon}\geq\frac{u-y}{1-u}\quad\mbox{for $\,u\leq F(t)$}.
 \end{align*} 
 Rearranging the inequality and using $\,y\approx F(t)$ gives
\[
\varepsilon+(1-\varepsilon)y\geq u,\qquad \qquad \mbox{or} \qquad \qquad \varepsilon+(1-\varepsilon)F(t)\geq u.
\]
Hence, $\{u:u\leq F(t)\}$ can be approximately enlarged to $\{u:u\leq\varepsilon+(1-\varepsilon)F(t)\}$ without changing the value in (\ref{eq:$1-beta=P[<supu<Fu-hatF_nF^-1]$}).

Overall, let $B$ be a Brownian bridge in $[0,1]$, then
\begin{align}\label{eq:$1-beta=P[<supBu]$}
    1-\beta&\approx\mathbb{P}\left[\frac{\varepsilon}{1-\varepsilon}<\sup_{u\leq\varepsilon+(1-\varepsilon)F(t)}\frac{B(u)}{\sqrt{N}(1-u)}\right].
\end{align}
\indent
The process $\,u\mapsto B(u)/(1-u)\,$ is a time-changed Brownian motion, as it is a continuous, zero-mean, Gaussian process with covariance function $\,(u,v)\mapsto u/(1-u)\,$ for $\,0\leq u\leq v\leq1$ \citep[pp.103-104]{Karatzas1988}.

Then, letting $W$ be a Brownian motion,  
\begin{align*}
    1-\beta \approx\mathbb{P}\left[\frac{\varepsilon}{1-\varepsilon}<\sup_{u\leq\varepsilon+(1-\varepsilon)F(t)}\frac{W(\textstyle\frac{u}{1-u})}{\sqrt{N}}\right] &= 2\,\mathbb{P}\left[\sqrt{N}\frac{\varepsilon}{1-\varepsilon}<W(\textstyle\frac{\varepsilon+(1-\varepsilon)F(t)}{1-(\varepsilon+(1-\varepsilon)F(t))})\right]
    \\&=2\,\Phi\left(-\sqrt{\tfrac{1-(\varepsilon+(1-\varepsilon)F(t))}{\varepsilon+(1-\varepsilon)F(t)}}\sqrt{N}\frac{\varepsilon}{1-\varepsilon}\right)
    \\&=2\,\Phi\left(-\sqrt{\tfrac{1}{1-(1-\varepsilon)\px[t]{x}}-1}\sqrt{N}\frac{\varepsilon}{1-\varepsilon}\right).
\end{align*}
Rearranging yields
\begin{align*}
    \px[t]{x}\approx\frac{1}{1-\varepsilon}\left(1-\frac{1}{1+\tfrac{1}{N}\left(\tfrac{1-\varepsilon}{\varepsilon}\right)^2 \left(\Phi^{-1} \left(\tfrac{1-\beta}{2} \right) \right)^2}\right).
\end{align*}
Observe that $\,\px[t]{x}\approx\actsymb[t]{\hat{p}}{x}\approx\actsymb[T_{(k_U)}]{\hat{p}}{x}=L_{x+T_{(k_U)}}/N=(N-k_U)/N=1-k_U/N$. 
Hence, the claim follows from $\px[t]{x}\approx1-k_U/N$ together with $k_U\leq N$ is an integer.
\end{proof}


\begin{thebibliography}{}

\bibitem[\protect\astroncite{Bengen}{2001}]{Bengen2001}
Bengen, W. (2001).
\newblock Conserving client portfolios during retirement, {Part IV}.
\newblock {\em Journal of Financial Planning}, 5:110--119.

\bibitem[\protect\astroncite{Billingsley}{1999}]{Billingsley1999}
Billingsley, P. (1999).
\newblock {\em Convergence of Probability Measures}.
\newblock Wiley Series in Probability and Statistics. Wiley, 3rd edition.

\bibitem[\protect\astroncite{Birnbaum and Lientz}{1969}]{Birnbaum1969}
Birnbaum, Z. and Lientz, B. (1969).
\newblock Exact distributions for some {R\'{e}nyi}-type statistics.
\newblock {\em Applicationes Mathematicae}, 10:179--192.

\bibitem[\protect\astroncite{Bruhn and Steffensen}{2013}]{BruhnSteffensen2013}
Bruhn, K. and Steffensen, M. (2013).
\newblock Optimal smooth consumption and annuity design.
\newblock {\em Journal of Banking and Finance}, 37:2693--2701.

\bibitem[\protect\astroncite{Chen et~al.}{2019}]{chenetal2019}
Chen, A., Hieber, P., and Klein, J.~K. (2019).
\newblock Tonuity: A novel individual-oriented retirement plan.
\newblock {\em ASTIN Bulletin}, 49(1):5--30.

\bibitem[\protect\astroncite{Chen and Rach}{2019}]{chenrach2019}
Chen, A. and Rach, M. (2019).
\newblock Options on tontines: An innovative way of combining tontines and
  annuities.
\newblock {\em Insurance: Mathematics and Economics}, 89:182--192.

\bibitem[\protect\astroncite{Constantinides}{1990}]{Constantinides1990}
Constantinides, G. (1990).
\newblock Habit formation: A resolution of the equity premium puzzle.
\newblock {\em Journal of Political Economy}, 98:519--543.

\bibitem[\protect\astroncite{{Continuous Mortality
  Investigation}}{2008}]{IFoA2008}
{Continuous Mortality Investigation} (2008).
\newblock {S1PFL}.
\newblock Technical report, {Institute and Faculty of Actuaries, UK}.
\newblock Data downloaded at
  https:/\kern-0.2em/www.actuaries.org.uk/learn-and-develop/continuous-mortality-investigation/cmi-mortality-and-morbidity-tables/s1-series-tables
  on 28 Sept 2019.

\bibitem[\protect\astroncite{Cs\"{o}rg\"{o}}{1965}]{Miklos1965}
Cs\"{o}rg\"{o}, M. (1965).
\newblock Exact probability distribution functions of some {R\'{e}nyi} type
  statistics.
\newblock {\em Proceedings of the American Mathematical Society},
  16(6):1158--1167.

\bibitem[\protect\astroncite{Curatola}{2017}]{Curatola2017}
Curatola, G. (2017).
\newblock Optimal portfolio choice with loss aversion over consumption.
\newblock {\em The Quarterly Review of Economics and Finance}, 66:345--358.

\bibitem[\protect\astroncite{Devroye}{1986}]{Devroye1986}
Devroye, L. (1986).
\newblock {\em Non-Uniform Random Variate Generation}.
\newblock Springer-Verlag New York.

\bibitem[\protect\astroncite{Donnelly}{2015}]{Donnelly2015}
Donnelly, C. (2015).
\newblock Actuarial fairness and solidarity in pooled annuity funds.
\newblock {\em ASTIN Bulletin}, 45(1):49--74.

\bibitem[\protect\astroncite{Donnelly et~al.}{2014}]{donnellyetal2014}
Donnelly, C., Guill{\'e}n, M., and Nielsen, J. (2014).
\newblock Bringing cost transparency to the life annuity market.
\newblock {\em Insurance: Mathematics and Economics}, 56:14--27.

\bibitem[\protect\astroncite{Donnelly and Young}{2017}]{DonnellyYoung2017}
Donnelly, C. and Young, J. (2017).
\newblock Product options for enhanced retirement income.
\newblock {\em British Actuarial Journal}, 22(3):636--656.

\bibitem[\protect\astroncite{Donsker}{1952}]{donsker1952}
Donsker, M. (1952).
\newblock Justification and extension of {Doob's} heuristic approach to the
  {Kolmogorov-Smirnov} theorems.
\newblock {\em The Annals of Mathematical Statistics}, 23(2):277--281.

\bibitem[\protect\astroncite{Forman and Sabin}{2015}]{FormanSabin2015}
Forman, J. and Sabin, M. (2015).
\newblock Tontine pensions.
\newblock {\em University of Pennsylvania Law Review}, 163(3):755--831.

\bibitem[\protect\astroncite{Guyton}{2004}]{Guyton2004}
Guyton, J. (2004).
\newblock Decision rules and portfolio management for retirees: is the `safe'
  initial withdrawal rate too safe?
\newblock {\em Journal of Financial Planning}, pages 54--62.
\newblock October 2004.

\bibitem[\protect\astroncite{Guyton and Klinger}{2006}]{GuytonKlinger2006}
Guyton, J. and Klinger, W. (2006).
\newblock Decision rules and maximum initial withdrawal rates.
\newblock {\em Journal of Financial Planning}, pages 50--58.

\bibitem[\protect\astroncite{He and Liang}{2013}]{HeLiang2013}
He, L. and Liang, Z. (2013).
\newblock Optimal dynamic asset allocation strategy for {ELA} scheme of {DC}
  pension plan during the distribution phase.
\newblock {\em Insurance: Mathematics and Economics}, 52(2):404--410.

\bibitem[\protect\astroncite{{Human Mortality Database}}{2018}]{HMD2016}
{Human Mortality Database} (2018).
\newblock {United Kingdom, Life tables (period 1x1), Total (both sexes)}.
\newblock Technical report, University of California, Berkeley (USA), and Max
  Planck Institute for Demographic Research (Germany).
\newblock Data downloaded at https:/\kern-0.2em/www.mortality.org on the
  28/09/2019.

\bibitem[\protect\astroncite{Karatzas and Shreve}{1988}]{Karatzas1988}
Karatzas, I. and Shreve, S. (1988).
\newblock {\em Brownian motion and stochastic calculus}.
\newblock Springer-Verlag.

\bibitem[\protect\astroncite{Kerman}{2011}]{Kerman2011}
Kerman, J. (2011).
\newblock A closed-form approximation for the median of the {Beta}
  distribution.
\newblock Available at arXiv. \url{https://arxiv.org/pdf/1111.0433}.

\bibitem[\protect\astroncite{Milevsky and
  Salisbury}{2015}]{MilevskySalisbury2015}
Milevsky, M. and Salisbury, T. (2015).
\newblock Optimal retirement income tontines.
\newblock {\em Insurance: Mathematics and Economics}, 64:91--105.

\bibitem[\protect\astroncite{Milevsky and
  Salisbury}{2016}]{MilevskySalisbury2016}
Milevsky, M.~A. and Salisbury, T.~S. (2016).
\newblock Equitable retirement income tontines: Mixing cohorts without
  discriminating.
\newblock {\em ASTIN Bulletin}, 46(3):571--604.

\bibitem[\protect\astroncite{Munk}{2008}]{Munk2008}
Munk, C. (2008).
\newblock Portfolio and consumption choice with stochastic investment
  opportunities and habit formation in preference.
\newblock {\em Journal of Economic Dynamics and Control}, 32(11):3560--3589.

\bibitem[\protect\astroncite{Pfau and Kitces}{2014}]{PfauKitces2014}
Pfau, W. and Kitces, M. (2014).
\newblock Reducing retirement risk with a rising equity glide path.
\newblock {\em Journal of Financial Planning}, 27:38--45.

\bibitem[\protect\astroncite{Piggott et~al.}{2005}]{Piggottetal2005}
Piggott, P., Valdez, E., and Detzel, B. (2005).
\newblock The simple analytics of a pooled annuity fund.
\newblock {\em The Journal of Risk and Insurance}, 72(3):497--520.

\bibitem[\protect\astroncite{Qiao and Sherris}{2013}]{qiaosherris2013}
Qiao, C. and Sherris, M. (2013).
\newblock Managing systematic mortality risk with group self-pooling and
  annuitization schemes.
\newblock {\em Journal of Risk and Insurance}, 80(4):949--974.

\bibitem[\protect\astroncite{Sabin}{2010}]{sabin2010}
Sabin, M. (2010).
\newblock Fair tontine annuity.
\newblock Available at SSRN. {http://dx.doi.org/10.2139/ssrn.1579932}.

\bibitem[\protect\astroncite{Shorack and Wellner}{2009}]{ShorackWellner2009}
Shorack, G. and Wellner, J. (2009).
\newblock {\em Empirical Processes with Applications to Statistics}.
\newblock SIAM.

\bibitem[\protect\astroncite{Stamos}{2008}]{stamos2008}
Stamos, M.~Z. (2008).
\newblock Optimal consumption and portfolio choice for pooled annuity funds.
\newblock {\em Insurance: Mathematics and Economics}, 43:56--68.

\bibitem[\protect\astroncite{Valdez et~al.}{2006}]{valdezetal2006}
Valdez, E.~A., Piggott, J., and Wang, L. (2006).
\newblock Demand and adverse selection in a pooled annuity fund.
\newblock {\em Insurance: Mathematics and Economics}, 39(2):251--266.

\bibitem[\protect\astroncite{Van~Bilsen et~al.}{2020}]{vanBilsenetal2017}
Van~Bilsen, S., Laeven, R., and Nijman, T. (2020).
\newblock Consumption and portfolio choice under loss aversion and endogenous
  updating of the reference level.
\newblock {\em Management Science}.
\newblock In press.

\bibitem[\protect\astroncite{Vrbik}{2018}]{Vrbik2018}
Vrbik, J. (2018).
\newblock Small-sample corrections to {Kolmogorov-Smirnov} test statistic.
\newblock {\em Pioneer Journal of Theoretical and Applied Statistics},
  15(1-2):15--23.

\bibitem[\protect\astroncite{{Work and Pensions
  Committee}}{2018}]{frankfield2018}
{Work and Pensions Committee} (2018).
\newblock Pension freedoms.
\newblock Technical report, House of Commons (United Kingdom).
\newblock {Ninth Report of Session 2017-19}, HC917.

\end{thebibliography}
\end{document}